\newtheorem{theorem}{Theorem}[section]
\newtheorem*{theoremp}{Theorem}
\newtheorem{lemma}[theorem]{Lemma}
\newtheorem{definition}[theorem]{Definition}
\newcommand{\rr}{\mathds{R}}
\newcommand{\ff}{\mathcal{F}}
\DeclareMathOperator{\conv}{conv}
\DeclareMathOperator{\poly}{poly}
\DeclareMathOperator{\deep}{depth}
\title{Algorithms for Tverberg's theorem via centerpoint theorems}
\author{D.~Rolnick}
\thanks{Massachusetts Institute of Technology, Cambridge, MA. \texttt{drolnick@mit.edu}}
\author{P.~Sober\'on}
\thanks{Northeastern University, Boston, MA. \texttt{p.soberonbravo@northeastern.edu}}
\begin{document}

\begin{abstract}
We obtain algorithms for computing Tverberg partitions based on centerpoint approximations.  This applies to a wide range of convexity spaces, from the classic Euclidean setting to geodetic convexity in graphs.  In the Euclidean setting, we present probabilistic algorithms which are weakly polynomial in the number of points and the dimension.  For geodetic convexity in graphs, we obtain deterministic algorithms for cactus graphs and show that the general problem of finding the Radon number is NP-hard.
\end{abstract}

\maketitle

%%%%%%%%%%%%%%%%%%%%%%%%%%%%%%%%%%%%%%%%%%%%%%%%

\section{Introduction}

Radon's lemma and Tverberg's theorem are central results in combinatorial geometry \cite{Radon:1921vh, Tverberg:1966tb}.  These theorems describe the size at which a point set can be dissected into overlapping convex hulls.

\begin{theoremp}[Tverberg's theorem]
Given a set of $(k-1)(d+1)+1$ points in $\rr^d$, there is a partition of the set into $k$ parts such that the convex hulls of the parts intersect. Furthermore, this bound is optimal.
\end{theoremp}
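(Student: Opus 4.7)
The plan is to follow Sarkaria's tensor trick, which elegantly reduces Tverberg's theorem to the colorful Carath\'eodory theorem of B\'ar\'any. Set $N = (k-1)(d+1)+1$ and let $x_1,\ldots,x_N \in \rr^d$ be the input points. First I would lift each point to $\hat{x}_i = (x_i, 1) \in \rr^{d+1}$, so that convex combinations become linear. Then I would fix auxiliary vectors $v_1,\ldots,v_k \in \rr^{k-1}$ placed as the vertices of a regular $(k-1)$-simplex centered at the origin, so that $\sum_j v_j = 0$ and any $k-1$ of them are linearly independent.

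The crucial step is to consider the $Nk$ tensor products $\hat{x}_i \otimes v_j$ in the space $\rr^{(d+1)(k-1)} \cong \rr^{N-1}$, grouped into $N$ color classes $C_i = \{\hat{x}_i \otimes v_j : 1 \le j \le k\}$. The identity
\[
\tfrac{1}{k}\sum_{j=1}^k \hat{x}_i \otimes v_j \;=\; \hat{x}_i \otimes \bigl(\tfrac{1}{k}\sum_j v_j\bigr) \;=\; 0
\]
shows that the origin lies in $\conv(C_i)$ for each $i$. Since there are $N=(N-1)+1$ color classes in dimension $N-1$, the colorful Carath\'eodory theorem delivers an index $j_i \in \{1,\ldots,k\}$ for each $i$ and weights $\lambda_i \ge 0$ with $\sum_i \lambda_i = 1$ such that $\sum_i \lambda_i\, \hat{x}_i \otimes v_{j_i} = 0$.

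I would then extract the partition by setting $A_j = \{i : j_i = j\}$ and $y_j = \sum_{i \in A_j} \lambda_i \hat{x}_i \in \rr^{d+1}$. The relation rewrites as $\sum_j y_j \otimes v_j = 0$; substituting $v_k = -\sum_{j<k} v_j$ and invoking linear independence of $v_1,\ldots,v_{k-1}$ forces all the $y_j$ to coincide. The last coordinate of $y_j$ equals $\sum_{i \in A_j} \lambda_i$, so every part carries mass exactly $1/k > 0$ (in particular, each $A_j$ is nonempty); normalizing, the first $d$ coordinates produce a single point belonging to $\conv(\{x_i : i \in A_j\})$ for every $j$, the desired Tverberg point.

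For optimality, I would exhibit a configuration of $(k-1)(d+1)$ points with no $k$-Tverberg partition, for instance points in sufficiently general position on the moment curve, where a dimension count rules out simultaneously intersecting convex hulls of any $k$-partition. The principal obstacle I expect is not the arithmetic but the opening move: recognizing that tensoring against vectors summing to zero converts the condition ``partitionable into $k$ parts with intersecting convex hulls'' into a single colorful convex-position statement in the enlarged space. Once the lifting is in place, the remainder is algebraic bookkeeping inside the tensor product.
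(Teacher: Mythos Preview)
Your proof via Sarkaria's tensor trick is correct and cleanly executed: the lifting, the choice of $v_1,\ldots,v_k$ summing to zero, the application of colorful Carath\'eodory in $\rr^{N-1}$, and the extraction of the common point all go through as you describe. The optimality sketch via general position is also the standard one.

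However, there is nothing to compare against. The paper does not prove Tverberg's theorem; it is quoted in the introduction as a classical background result with a citation to Tverberg's 1966 paper, and the paper's own contributions concern \emph{algorithms} for approximate Tverberg partitions (Theorems~\ref{theorem-centerpoint}--\ref{theorem-for-2-sep}) and hardness results (Theorem~\ref{thm:onepartition}). So your write-up is a valid proof of a statement the paper takes for granted, not a reconstruction of any argument appearing in the paper.
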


The case $k=2$ is known as Radon's lemma or Radon's theorem.  There are many generalizations and extensions of Tverberg's theorem, such as colorful \cite{Barany:1992tx, Blagojevic:2011vh, Blagojevic:2009wya}, topological \cite{Barany:1981vh,  frick2015, Volovikov:1996up}, and quantitative versions \cite{deloera2015}.  There are also connections to Helly's and Carath\'eodory's theorems. (See the survey \cite{Eck93} for a more in-depth presentation of the links between classic theorems in combinatorial geometry.)  Another gem in discrete geometry is Rado's centerpoint theorem \cite{rado1946theorem}.

\begin{theoremp}[Centerpoint theorem]
Let $d$ be a positive integer.  Given any finite set $X \subset \rr^d$, there is a point $p$ such that every closed half-space that contains $p$ also contains at least $\frac{|X|}{d+1}$ points of $X$.
\end{theoremp}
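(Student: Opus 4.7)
The plan is to deduce the centerpoint theorem from Helly's theorem applied to a carefully chosen family of convex sets in $\rr^d$. Recall that Helly's theorem asserts that a finite collection of convex sets in $\rr^d$ has a common point provided every $d+1$ of them do; the centerpoint statement is really a piece of combinatorial bookkeeping wrapped around that fact.

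First, I would define $\mathcal{F}$ to be the collection of all convex hulls $\conv(S)$, where $S$ ranges over subsets of $X$ satisfying $|S| > \tfrac{d|X|}{d+1}$. The key combinatorial observation is that any $d+1$ members of $\mathcal{F}$ share a point: if $S_1,\dots,S_{d+1}$ each satisfy $|S_i| > \tfrac{d|X|}{d+1}$, then $|X\setminus S_i| < \tfrac{|X|}{d+1}$ for every $i$, so by the union bound
\[
\Big|\bigcup_{i=1}^{d+1}(X\setminus S_i)\Big| < (d+1)\cdot\frac{|X|}{d+1} = |X|,
\]
which forces $\bigcap_{i=1}^{d+1} S_i \neq \emptyset$. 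Any point in that intersection lies in each $\conv(S_i)$, verifying the Helly hypothesis. Since $\mathcal{F}$ is finite, Helly's theorem then produces a point $p\in\bigcap_{C\in\mathcal{F}} C$.

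The next step is to check that this $p$ is a centerpoint. Assume for contradiction that some closed half-space $H$ contains $p$ but has $|H\cap X| < \tfrac{|X|}{d+1}$. Its open complement $U$ is a (convex) open half-space with $|U\cap X| > \tfrac{d|X|}{d+1}$. Setting $S = U\cap X$ gives $\conv(S)\in\mathcal{F}$ and $\conv(S)\subseteq U$, so $p\in U$, contradicting $p\in H$. Hence every closed half-space through $p$ contains at least $\tfrac{|X|}{d+1}$ points of $X$.

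The only real obstacle I anticipate is keeping the strict inequalities aligned: the threshold $\tfrac{d|X|}{d+1}$ must be strict both so that $d+1$ members of $\mathcal{F}$ intersect (via the union bound) and so that a half-space with fewer than $\tfrac{|X|}{d+1}$ points of $X$ yields a legitimate member of $\mathcal{F}$. Once Helly's theorem is available, there is no deeper difficulty.
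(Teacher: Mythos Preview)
Your argument is correct and is the standard Helly-based proof of the centerpoint theorem. Note, however, that the paper does not actually prove this statement: it is quoted in the introduction as Rado's centerpoint theorem with a citation, and no proof is supplied. That said, your approach is exactly the one the paper uses for the integer analogue (Lemma~\ref{lemma-doignon}): there the family
\[
\mathcal{F} = \left\{\conv(K) : K\subset S,\ |K|>\tfrac{2^d-1}{2^d}|S|\right\}
\]
is shown to satisfy the hypotheses of Doignon's theorem (the lattice version of Helly with constant $2^d$ in place of $d+1$), and the common integer point is then verified to be a $\tfrac{1}{2^d}$-centerpoint by the same half-space/complement contradiction you give. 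So your proof is precisely the Euclidean counterpart of what the paper does in the integer setting, with Helly replacing Doignon and $d+1$ replacing $2^d$.
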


  Tverberg's theorem can be thought of as a discrete strengthening of the centerpoint theorem.  Indeed, if $n = (k-1)(d+1)+1$, then $k = \left\lceil \frac{n}{d+1} \right\rceil$.  If $p$ is the point of intersection of a Tverberg partition, any half-space that contains $p$ also has at least one point from each part of the partition.  This implies that the point of intersection of a Tverberg partition is a centerpoint.  The problem of finding Tverberg partitions efficiently was motivated by the problem of finding centerpoints efficiently (see, for instance, \cite{Miller:2009bg}).
  
  Given a set $S$ of $n$ points in $\rr^d$, we say a point $p$ is an $\alpha$-centerpoint of $S$ if every closed halfspace containing $p$ has at least $\alpha n$ points of $S$.  Notice that any point in the intersection of a Tverberg partition $A_1, A_2, \ldots, A_k$ is immediately an $(k/n)$-centerpoint.  As Tverberg partitions can be checked for correctness, this provides a robust way of finding centerpoints.

  The aim of this paper is to show that the other direction is also interesting from an algorithmic perspective.  In other words, in order to compute (or approximate) Tverberg partitions, we can do so by computing centerpoints.  Moreover, since centerpoints can be easily generalized, the algorithms we present extend to various convexity spaces.
  
For the classic notion of convexity in $\rr^d$, it is not known if there exists an efficient algorithm for constructing Tverberg partitions under the assumptions of Tverberg's theorem. The case $k=2$ (finding a Radon partition) is, however, simple; this problem reduces to identifying a linear dependence. A similar technique can also be applied to variations on Radon partitions, such as the colorful version of Radon's lemma \cite{Soberon:2013fr}, giving an algorithmic proof. The interesting challenge lies in Tverberg-type results with $k>2$.

Obtaining optimal Tverberg partitions efficiently is out of reach for current algorithms.  However, if one is willing to pay the price of decreasing the value of $k$ slightly, better results can be obtained.  For instance, there is a deterministic algorithm by Miller and Sheehy that solves the problem for $k = \left\lceil \frac{n}{(d+1)^2}\right\rceil$ in $n^{O(\log d)}$ time \cite{Miller:2009bg}.  For faster running times in terms of $n$, there is an algorithm by Mulzer and Werner that solves the problem for $k = \left\lceil \frac{n}{4(d+1)^3}\right\rceil$ in $d^{O(\log d)}n$ time \cite{MW13}.  Note that even though this last algorithm is linear in $n$, it is superpolynomial in the dimension.  Related algorithms can be found for variations of Tverberg such as the version with tolerance \cite{MS14, Soberon:2012er}.  In this variation, given a positive integer $r$, the goal is to find a Tverberg partition with the property that even after removing any $r$ points the convex hulls of what is left on each part still intersect.

In Section \ref{sec:classic}, we present our algorithms for convexity in $\rr^d$.  Our algorithms are not deterministic. They carry a probability $\varepsilon$ of failure, which may be fixed in advance, as they depend heavily on the computation of approximated centerpoints by Clarkson et al.~\cite{Clarkson:1996ixba}.  For $k \sim n/d^2$ we present an algorithm that is weakly polynomial in $n$, but exponential in the dimension.  For $k \sim n/d^3$ we present an algorithm that is weakly polynomial in $n$, $d$ and $\log (1/\varepsilon)$.  This would be the first algorithm that is weakly polynomial in all variables.

\begin{theorem}\label{theorem-centerpoint}
 Let $k = \left\lceil \frac{n}{d(d+1)^2}\right\rceil$.  There is an algorithm that, given $n$ points in $\rr^d$, finds an $k$-Tverberg partition in time weakly polynomial in $n, d, \log(1/\varepsilon)$ with error probability $\varepsilon$.
\end{theorem}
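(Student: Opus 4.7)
My plan is to reduce the construction of a $k$-Tverberg partition to a single approximate centerpoint computation followed by an iterated Carath\'eodory extraction via linear programming.

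First, I would run the randomized approximate-centerpoint algorithm of Clarkson et al.~\cite{Clarkson:1996ixba} on the input set $S$. In time weakly polynomial in $n$, $d$, and $\log(1/\varepsilon)$, this returns, with probability at least $1-\varepsilon$, a point $p$ that is an $\alpha$-centerpoint of $S$ for some $\alpha$ slightly exceeding $1/(d(d+1))$. This is the only randomized piece; everything that follows is deterministic, so the overall error probability is inherited from this step.

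Next, I would build the partition greedily. Initialize $R \leftarrow S$. For $i=1,\dots,k-1$, solve a feasibility linear program to pick a subset $A_i \subseteq R$ of size $d+1$ whose convex hull contains $p$, and set $R \leftarrow R \setminus A_i$. Finally, set $A_k := R$. Carath\'eodory's theorem guarantees such an $A_i$ exists whenever $p \in \conv(R)$, and the supporting-hyperplane characterization of convex hulls reduces that in turn to the statement that every closed halfspace containing $p$ meets $R$. Since $p$ is an $\alpha$-centerpoint of $S$, every such halfspace contains at least $\alpha n$ points of $S$, and the total number of removed points after any iteration is at most $(k-1)(d+1) \le n/(d(d+1)) \le \alpha n$. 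Hence the halfspace condition is preserved throughout the loop and for $A_k$ itself, so every extraction succeeds and $p \in \conv(A_i)$ for each $i$. This yields the desired Tverberg partition with stabbing point $p$. Accounting for costs, the extraction phase consists of at most $n$ feasibility LPs in $d$ dimensions, each solvable in $\poly(n,d)$ time, so the total running time remains weakly polynomial in $n$, $d$, $\log(1/\varepsilon)$.

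The main obstacle will be calibrating the centerpoint approximation of \cite{Clarkson:1996ixba} so that the quality $\alpha$ exceeds $1/(d(d+1))$ (with a $1/n$ slack for the last extraction), while keeping the running time weakly polynomial in all three of $n$, $d$, $\log(1/\varepsilon)$. Their iterated Radon construction produces centerpoints of quality $\Omega(1/d^2)$, but one must choose its sampling and iteration parameters carefully to achieve the precise threshold forced by the target value of $k$. Once that constant is in hand, the extraction analysis is a straightforward counting argument.
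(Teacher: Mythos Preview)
Your overall architecture matches the paper's: compute an approximate centerpoint via Clarkson et al., then peel off Carath\'eodory sets one at a time. The gap is in the depth accounting. You charge each extraction $d+1$ against the halfspace depth of $p$, which forces you to demand $\alpha>1/(d(d+1))$. But the iterated-Radon procedure of \cite{Clarkson:1996ixba} that runs in time weakly polynomial in $d$ only guarantees $\alpha$ approaching $1/(d+1)^2$, and $1/(d+1)^2<1/(d(d+1))$ for every $d\ge 1$. No tuning of sampling or iteration parameters pushes iterated Radon past that barrier; getting $\alpha$ near $1/(d+1)$ requires their LP-on-a-sample route, which costs $(d/\lambda)^{O(d)}$ and is precisely what separates Theorem~\ref{theorem-nonpolynomial-centerpoint} from Theorem~\ref{theorem-centerpoint}. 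With your bookkeeping and the $\alpha$ actually available, you would only reach $k\approx n/(d+1)^3$, not the claimed $k=\lceil n/(d(d+1)^2)\rceil$.

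The paper recovers the missing factor of $(d+1)/d$ not by strengthening the centerpoint but by sharpening the extraction. Make each Carath\'eodory set $A_i$ \emph{minimal} (a constant number of extra LPs to test the $d+1$ facets). If $|A_i|=d+1$ and $A_i$ is minimal, the points are affinely independent and $p$ lies in the interior of the simplex $\conv(A_i)$; hence every closed halfspace whose boundary hyperplane passes through $p$ contains at most $d$ points of $A_i$. Since $\deep(p,\cdot)$ is realized by such halfspaces, removing $A_i$ drops $\deep(p,R)$ by at most $d$, not $d+1$. With this refinement the loop survives as long as $\deep(p,R)>0$, i.e.\ for at least $\lceil \alpha n/d\rceil$ rounds, and $\alpha\approx 1/(d+1)^2$ from Clarkson et al.\ now yields exactly $k=\lceil n/(d(d+1)^2)\rceil$.
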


\begin{theorem}\label{theorem-nonpolynomial-centerpoint}
Let $d, \lambda$ be fixed, where $d$ is a positive integer and $0<\lambda<\frac{1}{d+1}$, and $k = \left\lceil n\left(\frac{1}{d+1}-\lambda \right)\frac{1}{d}\right\rceil$.  Then, there is an algorithm that, given a set of $n$ points in $\rr^d$, finds an $k$-Tverberg partition in weakly polynomial time in $n$, $O({n^4} \log(1/{\varepsilon}))$, with error probability $\varepsilon$.
\end{theorem}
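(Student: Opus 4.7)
The strategy is the algorithmic reversal of the paper's key observation that the intersection point of a Tverberg partition is automatically a centerpoint: starting from a sufficiently accurate approximate centerpoint $p$, we assemble $k$ disjoint Carath\'eodory witnesses for $p$ one at a time. Write $\alpha = \frac{1}{d+1} - \lambda$, so the target is $k = \lceil n\alpha/d\rceil$.

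First, I would run the randomized algorithm of Clarkson et al.\ on the input set $S$ to obtain a point $p$ that is an $\alpha$-centerpoint of $S$, with failure probability at most $\varepsilon$. For fixed $d$ and $\lambda$ this takes time $O(n^4\log(1/\varepsilon))$, which will dominate the total cost. Next, I would construct the parts $T_1,\ldots,T_k$ iteratively: at stage $i$, select at most $d+1$ as-yet-unused points of $S$ whose convex hull contains $p$, via a standard Carath\'eodory step (a feasibility LP, or successive elimination of affine dependencies), and mark those points as used. Once all $k$ stages succeed, any leftover points of $S$ are distributed arbitrarily among the $T_i$; enlarging each $T_i$ cannot destroy $p\in\conv(T_i)$, so the result is a valid $k$-Tverberg partition.

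The geometric core is the counting argument that guarantees the $\lceil n\alpha/d\rceil$ extractions actually succeed. The key observation is: if $T$ has $d+1$ points and $p\in\conv(T)$, then any closed half-space $H$ whose bounding hyperplane contains $p$ must have at least one point of $T$ in the opposite open half-space, so $H$ itself contains at most $d$ points of $T$. Applying this to $T_1,\ldots,T_{i-1}$ shows that after $i-1$ extractions, every closed half-space through $p$ still contains at least $\alpha n - (i-1)d$ original points of $S$. As long as this quantity is at least $1$, every side of $p$ retains a point, so $p$ lies in the convex hull of the unused set and Carath\'eodory delivers the next $T_i$. This holds for every $i\le k$ whenever $(k-1)d\le \alpha n - 1$, which matches the theorem's value of $k$.

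The main obstacle I expect is not combinatorial but concerns the careful invocation of the approximate-centerpoint subroutine: I need $p$ to meet the $\alpha$-centerpoint condition with probability at least $1-\varepsilon$, and the $\log(1/\varepsilon)$ factor in the running time arises from a standard repeat-and-verify amplification of the success probability of Clarkson et al.\ A secondary check is that each Carath\'eodory extraction runs in time polynomial in $n$ and $d$, so that the cumulative cost of the $k$ extractions is absorbed by the $O(n^4\log(1/\varepsilon))$ term from the centerpoint computation.
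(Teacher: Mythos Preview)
Your proposal is correct and follows essentially the same route as the paper: compute an approximate $\alpha$-centerpoint $p$ via Clarkson et al., then iteratively peel off Carath\'eodory witnesses of size at most $d+1$ containing $p$, using the observation that each extraction lowers the half-space depth of $p$ by at most $d$ (this is exactly the content of the paper's Lemma~\ref{lemma-simple}, combined with the LP-based Carath\'eodory extraction of Lemma~\ref{lemma-caratheodory}). The only cosmetic additions in your write-up are the explicit redistribution of leftover points and the remark about amplifying the success probability, neither of which the paper spells out but both of which are implicit.
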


In Theorem \ref{theorem-nonpolynomial-centerpoint}, if we allow $d, \lambda$ to vary, we need an additional factor of $(d/{\lambda})^{O(d)}$.   As we mentioned before, the same ideas can be used in more general settings.  For example, we obtain algorithms for the integer version of Tverberg's theorem, which was first described in \cite{Eck68}:

\begin{theoremp}
Given $k,d$ positive integers, there is an integer $T=T(k,d)$ such that for any set of $T$ points in $\rr^d$ with integer coordinates, there is a partition of the set into $k$ parts such that the intersection of the convex hulls of the parts contains a point with integer coordinates.
\end{theoremp}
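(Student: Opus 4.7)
The plan is to adapt the centerpoint-based strategy of the paper to the lattice setting, replacing Rado's centerpoint theorem by an integer analogue. The argument has two main ingredients: (i) for every finite $S \subset \Z^d$, there exists a lattice point $p \in \Z^d$ that is an $\alpha_d$-centerpoint of $S$, with $\alpha_d > 0$ depending only on $d$; and (ii) once such a $p$ has been identified, one can extract $k$ disjoint subsets of $S$ of bounded size whose convex hulls all contain $p$, yielding the Tverberg partition with $p$ as the common lattice intersection point.

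For ingredient (i), I would invoke Doignon's integer Helly theorem, which asserts that the convexity space on $\Z^d$ whose convex sets are $\{C \cap \Z^d : C \subseteq \rr^d \text{ convex}\}$ has Helly number $2^d$. In an abstract convexity space, a finite Helly number yields a fractional Helly theorem and thence a first-selection lemma, so that a lattice $\alpha_d$-centerpoint must exist, with $\alpha_d$ depending only on $d$ (though at best exponentially small in $d$). For ingredient (ii), the integer Carath\'eodory theorem (bounded by $2d$, by work of Seb\H{o} and others) ensures that any lattice point in the convex hull of a lattice set lies in the convex hull of at most $c(d)$ of its points; combining this bound with the centerpoint property, a Boros--F\"uredi-style counting argument guarantees that a positive fraction $\beta_d$ of all $c(d)$-subsets of $S$ have convex hulls containing $p$. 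Greedy extraction of $k$ disjoint such subsets then succeeds whenever $|S| \geq T(k,d) = O\!\left( k \cdot c(d) / \beta_d \right)$.

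The main obstacle is quantifying the resulting $T(k,d)$. Each of the ingredients (Doignon, integer Carath\'eodory, first selection) is finite and classical, but their dependence on $d$ is at best exponential, so the bound obtained in this manner is linear in $k$ and exponential in $d$. A tighter bound would require either a sharper lattice centerpoint constant $\alpha_d$ or a more delicate argument than greedy extraction. Existence of $T(k,d)$ alone also follows abstractly from the Levi--Eckhoff--Jamison principle that a finite Helly number implies finite Tverberg numbers in any convexity space, but that abstract route yields no algorithm, whereas the centerpoint approach above is explicit and aligns with the paper's algorithmic theme.
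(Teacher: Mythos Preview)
Your proposal is correct in outline, and it follows the same two-step skeleton as the paper---lattice centerpoint, then extraction of disjoint Carath\'eodory witnesses---but both steps are carried out more indirectly than necessary, and one invocation is misplaced.

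For ingredient~(i), the paper does not pass through fractional Helly or first selection. Lemma~2.4 applies Doignon's theorem directly, in the same way the classical centerpoint theorem follows from Helly: the family $\{\conv(K): K\subset S,\ |K|>(1-2^{-d})|S|\}$ has the property that any $2^d$ members share a point of $S$, so by Doignon the whole family shares an integer point $z$, which is then a $2^{-d}$-centerpoint. This gives the sharp constant $\alpha_d=2^{-d}$ immediately, whereas your route through abstract fractional Helly and first selection would produce a strictly worse and less explicit $\alpha_d$.

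For ingredient~(ii), you do not need an integer Carath\'eodory theorem: once $z\in\Z^d$ is fixed, the requirement is only $z\in\conv(C)$, so the ordinary Carath\'eodory bound $d+1$ applies (Seb\H{o}'s $2d$ concerns integer conic combinations, which is a different problem). More substantively, the paper's extraction (Lemma~2.2) is a depth-peeling argument rather than a Boros--F\"uredi count plus greedy: one repeatedly pulls off a minimal $(\le d{+}1)$-set $C$ with $z\in\conv(C)$; any closed halfspace through $z$ meets $C$ in at most $d$ points, so $\operatorname{depth}(z,S)$ drops by at most $d$ per round. Starting from depth $|S|/2^d$, this yields $\lceil |S|/(d\,2^d)\rceil$ disjoint parts and hence $T(k,d)\le (k-1)\,d\,2^d+1$, exactly the bound quoted in the paper. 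Your counting-and-greedy scheme also works but loses an additional polynomial-in-$d$ factor through $\beta_d$ and gives no clean closed form.

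In short: same architecture, but the paper's implementation of each step is shorter, uses only Doignon plus ordinary Carath\'eodory, and produces the explicit bound $(k-1)d\,2^d+1$ rather than an unspecified $O(k\cdot c(d)/\beta_d)$.
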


The exact values for $T(k,d)$ remain unknown.  Even the case $k=2$ is not completely solved, as the best current bounds are $5\cdot 2^{d-2}+1 \le T(2,d) \le d\cdot (2^d-1)+3$ and $T(k,d) \le (k-1)d\cdot 2^d+1$ \cite{deloera2015, Onn:1991er}.  Our algorithms yield the following:

\begin{theorem}\label{theorem-integer-tverberg}
Let $d, \lambda, \varepsilon$ be fixed and that $0<\lambda<{2^{-(d+2)}}$.  Then, there is an algorithm that is weakly polynomial in $n$ that, for any set of $n$ integer points in $\rr^d$, gives (i) a partition of them into $\left\lceil \frac{n}{d} \left(\frac{1}{2^{d}}-\lambda \right) \right\rceil$ parts and (ii) an integer point $z$ in the convex hull of every part, with error probability $\varepsilon$.
\end{theorem}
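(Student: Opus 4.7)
The plan is to mirror Theorem~\ref{theorem-nonpolynomial-centerpoint} with the classical centerpoint replaced by its integer analogue. The Doignon--Bell--Scarf theorem says that the Helly number of the convex geometry of $\Z^d$ is $2^d$. Running Rado's original argument (Helly applied to inward-shifted halfspaces) in the integer setting yields: for any finite $S \subset \Z^d$ there is a point $z \in \Z^d$ such that every closed halfspace containing $z$ contains at least $|S|/2^d$ points of $S$. This is the integer centerpoint theorem that will play the role of Rado's centerpoint theorem in what follows.

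To compute an approximate version efficiently, I would adapt the sampling scheme of Clarkson et al.~\cite{Clarkson:1996ixba}: draw a random subsample $T \subseteq S$ of size polynomial in $d$, $1/\lambda$, and $\log(1/\varepsilon)$, compute an exact integer centerpoint of $T$ by enumerating candidate lattice points inside $\conv(T)$, and use an $\varepsilon$-net argument on the family of halfspaces (which has bounded VC dimension) to conclude that with probability at least $1-\varepsilon$ the resulting point is a $(\tfrac{1}{2^d} - \lambda)$-integer-centerpoint for $S$. The condition $\lambda < 2^{-(d+2)}$ leaves room for both this concentration estimate and the greedy partitioning step described next.

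With $z$ in hand, I would run the iterative procedure from Theorem~\ref{theorem-nonpolynomial-centerpoint}. At each of the $k = \lceil \tfrac{n}{d}(\tfrac{1}{2^d} - \lambda)\rceil$ steps, extract from the residual point set a block of roughly $d/(\tfrac{1}{2^d} - \lambda)$ points whose convex hull contains $z$, using the fact that $z$ is still deep enough in the residual set to be in its convex hull (Carath\'eodory's theorem then supplies a small witness block). The calibration of $\lambda$ guarantees that the total number of points removed from any halfspace through $z$ across all $k$ steps stays strictly below the depth $(\tfrac{1}{2^d}-\lambda)n$, so $z$ never escapes the convex hull of what remains.

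The main obstacle is the algorithmic integer centerpoint step: unlike the Euclidean case there is no a~priori bound on where to search for $z$. Restricting the search to $\conv(T)$ for a sample of constant size (given $d$, $\lambda$) overcomes this, but verifying the depth guarantee back in $S$ requires the halfspace range space to behave well under integer sampling, and matching the $\varepsilon$-net error to the $2^{-(d+2)}$ slack is the delicate calibration in the proof. Once this is done, the greedy partitioning is essentially identical to the Euclidean argument, since it only uses that $z$ lies in the convex hull of the residual point set at each step.
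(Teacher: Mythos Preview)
Your overall architecture matches the paper's: prove an integer centerpoint theorem via Doignon, approximate it by passing to a random $\lambda$-sample of constant size (for fixed $d,\lambda,\varepsilon$), then run the greedy Carath\'eodory-peeling of Lemma~\ref{lemma-simple}. Two points need correction.

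\medskip
\textbf{The enumeration step is not weakly polynomial.} You propose to find the integer centerpoint of the sample $T$ by ``enumerating candidate lattice points inside $\conv(T)$.'' Although $|T|$ is a constant, the points of $T$ are integer vectors with arbitrary bit length, so $\conv(T)$ can contain a number of lattice points exponential in the input size; restricting the search to $\conv(T)$ does \emph{not} overcome the difficulty you flag. The paper avoids this by observing that an integer $\tfrac{1}{2^d}$-center of $T$ is an integer point in the intersection of all closed halfspaces containing more than $(1-2^{-d})|T|$ points of $T$. Since $|T|$ and $d$ are fixed, there are only a constant number of such halfspace constraints, so this is an integer program in fixed dimension with a fixed number of constraints, solvable in time polynomial in the bit size (Lenstra). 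That is the step you are missing.

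\medskip
\textbf{The block size in the peeling.} You describe extracting blocks of ``roughly $d/(\tfrac{1}{2^d}-\lambda)$'' points at each step. The argument in Lemma~\ref{lemma-simple} does not use equal-size blocks; it extracts a \emph{minimal} Carath\'eodory witness $C$ of size at most $d+1$ at each step. Minimality is what guarantees $|H^+\cap C|\le d$ for every closed halfspace $H^+$ whose boundary passes through $z$, so that removing $C$ lowers $\deep(z,X)$ by at most $d$. Starting from depth at least $(\tfrac{1}{2^d}-\lambda)n$, this is why one can peel $\lceil \tfrac{n}{d}(\tfrac{1}{2^d}-\lambda)\rceil$ times. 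Removing larger blocks would drop the depth too fast and the count would not come out right. After the $k$ Carath\'eodory blocks are found, the remaining points are distributed arbitrarily among the parts to complete the partition.
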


As mentioned before, the same type of ideas work for more general convexity spaces.  We say a pair $(X, \conv)$ is a \textit{convexity space} \cite{van1993theory} if $X$ is a set and $\conv : 2^X \to 2^X$ is a function that satisfies
\begin{itemize}
	\item $A \subset \conv (A)$ for all $A \subset X$,
	\item if $A \subset B \subset X$, then $\conv(A) \subset \conv (B)$, 
	\item $\conv (A) = \conv (\conv (A))$ for all $ A \subset X$, and
	\item If $A_1 \subset A_2 \subset \ldots$ is an infinite sequence of nested sets, then $A=\cup_{i=1}^{\infty}\conv (A_i)$ is convex, i.e. $A = \conv (A)$.
\end{itemize}

For instance, if $X = \rr^d$ and $\conv(\cdot)$ is the classic convex hull, we have a convexity space.  However, the definition above is relaxed enough to capture many purely combinatorial instances.  Given a connected graph $G$ and two vertices $u,v \in V(G)$, we can define the segment $[u,v]$ as the union of all shortest paths from $u$ to $v$.  Then, for any $A \subset V(G)$, we can take $\conv(A) = \cup_{u,v \in A}[u,v]$, yielding what is known as a \emph{geodetic convexity space}.  Tverberg-type results are interesting only for convexity spaces in which $\conv (\emptyset) = \emptyset$.

\begin{definition}
Given a convexity space $(X, \conv)$, the $k$th \emph{Radon number} (if it exists) is the smallest integer $r_k$ such that for any $r_k$ points in $X$, there is a $k$-partition such that the convex hulls of all parts intersect.
\end{definition}

Tverberg and Radon-type results in general convexity spaces are much more enigmatic than in the setting of $\rr^d$.  A classic conjecture by Eckhoff, known as the partition conjecture \cite{Eckhoff:2000jwa}, states that if $r_2$ exists, then $r_k$ exists and is at most $(r_2-1)(k-1)+1$.  Thus, if this conjecture holds, Tverberg's theorem is a purely combinatorial consequence of Radon's lemma.  Even though there have been convexity spaces proposed by Bukh showing that the partition conjecture does not hold in general \cite{bukh}, it is interesting to know for which convexity spaces it does hold.

It was shown by Duchet that the partition conjecture on finite convexity spaces in fact reduces to the case of geodetic convexity spaces on graphs \cite{duchet1998}. Jamison corroborated the partition conjecture for certain graphs \cite{Jamison:1981wz}. However, geodetic convexity spaces for general graphs are little understood, and even the Radon numbers remain unknown for many graphs. Results for grid graphs were recently presented by Dourado et al.~in \cite{dourado2013,dourado2013polynomial}.

In general convexity spaces, we need to modify our definition of centerpoints accordingly.  The definition below is equivalent to that of an $\alpha$-centerpoint in the usual convexity in $\rr^d$.

\begin{definition}
	Let $(X, \conv)$ be a convexity space and $Y \subset X$ be a finite subset.  Then, we say that $p$ is an $\alpha$-centerpoint of $Y$ if and only if for every subset $Z \subset Y$ such that $|Z| > (1-\alpha)|Y|$, we have $p \in \conv (Z)$.
\label{def:centerpoint}
\end{definition}

In Section \ref{sec:geodetic-positive}, we present algorithms for finding Tverberg partitions in geodetic convexity spaces induced by certain graphs. These (deterministic) algorithms rely on finding centerpoints and using them to obtain Tverberg partitions.  Given a graph $G$, we say it is a \emph{cactus graph} (equivalently, it is \emph{2-separable}) if it is connected and for any two vertices $u,v$ there are at most two edges such that removing them disconnects $u$ from $v$.

\begin{theorem}\label{theorem-for-trees}
Let $G$ be a tree on $n$ vertices and $U \subset V$ be a subset of $2k$ vertices.  Then, there exists a partition of $U$ into $k$ sets such that their convex hulls intersect.  Moreover this partition may be found in time at most linear in $n$.
\end{theorem}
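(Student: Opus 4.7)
The strategy is to find a vertex $p \in V(G)$ that serves as a ``centroid'' of $U$---meaning every component of $G - p$ contains at most $k$ vertices of $U$---and then to partition $U$ into $k$ pairs whose connecting paths all pass through $p$. Since in a tree the convex hull of a pair $\{u,v\}$ is precisely the unique $u$-$v$ path, $p$ will then lie in the intersection of all $k$ convex hulls, giving the desired Tverberg partition.

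The first step is to find such a $p$ by the classical weighted tree centroid construction. Assign weight $1$ to each vertex of $U$ and weight $0$ elsewhere, root $G$ at an arbitrary vertex, and use a single DFS to compute the subtree $U$-weight $s(v)$ at every vertex. Then starting from the root, repeatedly descend into any child whose subtree has $U$-weight exceeding $k$; this eventually halts at some $p$, and at $p$ every component of $G - p$ has $U$-weight at most $|U|/2 = k$. This runs in $O(n)$ time.

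The second step is to pair up $U$ across the subtrees of $G - p$. Let $T_1, \ldots, T_m$ be the components of $G - p$ and $w_i = |U \cap V(T_i)|$, ordered so that $w_1 \geq \cdots \geq w_m$. If $p \notin U$, then $\sum_i w_i = 2k$ with $w_1 \leq k$, and I would greedily take one vertex from the heaviest remaining subtree and one from the second-heaviest, pair them, decrement the counts, and repeat $k$ times. A short induction shows that the invariant ``current maximum $\leq$ half of current total'' is preserved at each step; the only case that needs checking is that the third-heaviest weight does not exceed half the remaining total, and this follows from $w_1 + w_2 + w_3 \leq$ current total. Each resulting pair then straddles two components of $G - p$, so the path between its vertices traverses $p$. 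If instead $p \in U$, I first pair $p$ with an arbitrary vertex of the heaviest subtree, after which the remaining $2k-2$ vertices have maximum count at most $k-1$ (since $w_1 + w_2 \leq 2k-1$ forces $w_2 \leq k-1$ when $w_1 = k$), and the greedy pairing completes as before.

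The main technical point is verifying that the greedy across-subtree pairing always succeeds under the centroid condition---a clean balancing argument by elementary induction, with no real combinatorial obstacle. All subroutines (DFS for the weights, descent to the centroid, and the greedy pairing) run in time linear in $n$, which yields the claimed complexity.
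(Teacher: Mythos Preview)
Your proposal is correct and follows essentially the same approach as the paper: find a weighted centroid $p$ with no component of $G-p$ carrying more than $k$ points of $U$, then greedily pair vertices from the two heaviest components so that every pair straddles $p$. The paper handles the case $p\in U$ by declaring $\{p\}$ a separate component $G_0$ rather than special-casing it as you do, and its linear-time argument for locating $p$ is phrased as a leaf-inward recursion rather than a DFS-plus-descent, but these are cosmetic differences.
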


\begin{theorem}\label{theorem-for-2-sep}
Let $G$ be a cactus graph on $n$ vertices, and $U$ a set of $4k-2$ vertices of $G$.  Then, there exists a partition of $U$ into $k$ parts such that their convex hulls intersect.  Moreover this partition may be found in time at most quadratic in $n$.
\end{theorem}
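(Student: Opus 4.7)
The plan is to locate a single vertex $p \in V(G)$ that lies in the convex hulls of enough ``chord-like'' pairs of marked vertices, and then group those pairs into $k$ Tverberg parts. First I would build the block-cut tree $T$ of the cactus $G$, weighting each node (block or cut vertex) by the number of marked vertices of $U$ it carries. A weighted centroid $x$ of $T$ has the property that each subtree of $T - x$ carries weight at most $\lfloor |U|/2 \rfloor = 2k - 1$.

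I would use $x$ to select a vertex $p \in V(G)$ serving as a $\tfrac{1}{2}$-centerpoint of $U$ in the sense of Definition~\ref{def:centerpoint}. If $x$ is a cut-vertex node, taking $p := x$ suffices: each component of $G - p$ then contains at most $2k - 1$ marked vertices, and any pair of marked vertices in distinct components has $p$ in its convex hull because every path between them passes through $p$. If $x$ is a block, I would instead take $p$ to be an appropriately chosen vertex of the block (an endpoint for an edge-block; or a balancing vertex along a cycle-block, selected via the $2$-separability of the cactus so that a $2$-edge cut of the cycle defines two branches of balanced marked weight). With $p$ in hand, a greedy matching---repeatedly pairing marked vertices drawn from the two branches of largest remaining marked count---produces $2k - 1$ pairs, each pair $\{u, v\}$ satisfying $p \in \conv\{u, v\}$.

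Finally I would combine the $2k - 1$ pairs into $k - 1$ quadruples (each formed by merging two pairs) plus one leftover pair, yielding $k$ parts of total size $4(k - 1) + 2 = 4k - 2$. Each part's convex hull contains the convex hull of at least one of its constituent pairs, hence contains $p$, so all $k$ hulls intersect at $p$ and the partition is a valid $k$-Tverberg partition. The main obstacle I anticipate is handling the cycle-block case: because a non-cut vertex of a cycle does not disconnect $G$, defining balanced branches and verifying the matching's feasibility requires a careful argument exploiting the $2$-edge-cut structure particular to cacti. All subsequent operations (block-cut tree construction, centroid finding, matching, and combining) run in time linear in $n$, comfortably within the claimed quadratic bound.
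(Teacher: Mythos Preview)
Your reduction to a block--cut tree and the handling of the cut-vertex case are fine and essentially match the paper's tree argument. The genuine gap is in the cycle-block case: a \emph{single} vertex $p$ will not always support $2k-1$ pairs whose convex hulls all contain $p$, so the scheme of ``pick one balancing vertex on the cycle and match across a $2$-edge cut'' cannot be completed.

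Here is a concrete obstruction. Let $G$ be a long cycle $C_m$ with three pendant stars attached at three roughly equidistant cycle vertices $a,b,c$, each star contributing two marked leaves, so $|U|=6$ and $k=2$. The weighted centroid of the block--cut tree is the cycle block. For any vertex $p$ on the cycle, a pair $(u,v)$ drawn from two different stars has $p\in\conv\{u,v\}$ only if $p$ lies on the shorter arc between the corresponding attachment points; the three shorter arcs $[a,b]$, $[b,c]$, $[c,a]$ have empty common intersection. A pair inside one star has convex hull meeting the cycle only at that star's attachment point. Checking the finitely many pairings of the six marked leaves shows that \emph{no} single vertex $p\in V(G)$ lies in all three pair-hulls simultaneously, so your matching step cannot produce $2k-1=3$ pairs through a common $p$. (The desired $2$-Tverberg partition nonetheless exists, of course, but not via a single centerpoint.)

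The paper resolves exactly this difficulty by looking for \emph{two} vertices $x,y$ such that every component of $G\setminus\{x,y\}$ carries at most $2k-1$ marked vertices; then each of the $2k-1$ cross pairs contains $x$ or $y$, and pigeonhole gives $k$ pairs through a common one. Finding such $x,y$ when the tree-centroid lands on a cycle block is done by placing the cycle as a convex polygon, recording two measures $\mu_1$ (marks on the cycle) and $\mu_2$ (marks in the hanging branches, attributed to their attachment vertex), and applying a discrete ham-sandwich cut to pick $x,y$ on the cycle. That two-point step is the missing idea in your outline.
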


%Choosing the correct parameter to play the role of the dimension in Tverberg-type results for graphs is more difficult than it looks at first sight.  Even though tree-width sounds like a natural conjecture, we should stress that cactus graphs can have unbounded tree-width \cite{cicalese2012graphs}.

We conclude in Section \ref{sec:geodetic-negative} by presenting results on the hardness of finding Radon partitions for the geodetic convexity spaces induced by general graphs.  We define the \emph{graph-Radon problem} as follows:\\
\textbf{Input:} A graph $G=(V,E)$ and subset $W\subset V$.\\
\textbf{Output:} A decision whether or not there exists a partition $(W^+,W^-)$ of $W$ such that $\conv(W^+)\cap \conv(W^-)$ is nonempty.

Likewise, we define the \emph{graph-Radon counting problem}:\\
\textbf{Input:} A graph $G=(V,E)$ and subset $W\subset V$.\\
\textbf{Output:} The number of partitions $(W^+,W^-)$ of $W$ such that $\conv(W^+)\cap \conv(W^-)$ is nonempty.

\begin{theorem}
\label{thm:onepartition}
The graph-Radon problem is NP-hard and the graph-Radon counting problem is \#P-hard.
\end{theorem}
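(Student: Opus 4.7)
The plan is to prove NP-hardness by a polynomial-time reduction from 3-SAT (or a convenient NP-complete variant such as positive NAE 3-SAT, whose symmetric structure pairs well with partitions). Given a Boolean formula $\phi$ with variables $x_1, \ldots, x_n$ and clauses $C_1, \ldots, C_m$, I would construct a graph $G$ and a distinguished subset $W \subset V(G)$ so that Radon partitions of $W$ correspond to satisfying assignments of $\phi$. Each variable $x_i$ would contribute a pair of $W$-vertices $u_i, v_i$ together with a forcing gadget that compels them to lie on opposite sides of any Radon partition; the side on which $u_i$ lies encodes the truth value of $x_i$. Each clause $C_j$ would contribute a witness vertex $p_j$ attached via a clause gadget of carefully calibrated shortest-path structure, with the property that $p_j$ lies in both $\conv(W^+)$ and $\conv(W^-)$ if and only if $C_j$ is satisfied by the encoded assignment. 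A final aggregation subgraph would glue these witnesses together so that the two hulls intersect precisely when every clause is satisfied, yielding the desired reduction.

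The main obstacle is controlling the iterated closure that defines geodetic convex hulls. Unlike in the Euclidean setting, $\conv(A)$ is obtained by iteratively adding all vertices on shortest paths between pairs of elements of $A$ until a fixed point is reached, so a single unintended shortest path can cause the hull to swell uncontrollably and create spurious intersections unrelated to the logical meaning of the gadgets. Designing the gadgets so that the hull growth is confined to the intended witness vertices requires delicate calibration of path lengths, typically achieved by subdividing edges to enlarge strategic distances and by attaching pendant vertices to block undesired shortest paths. Keeping the overall gadget size polynomial while maintaining these separations is the core technical challenge of the construction.

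For the counting version, the goal is to make the reduction parsimonious, so that the number of Radon partitions of $(G,W)$ equals the number of satisfying assignments of $\phi$; since $\#$3-SAT is $\#$P-complete, this establishes $\#$P-hardness of the graph-Radon counting problem. Parsimony imposes the extra requirement that no gadget introduce spurious degrees of freedom: every auxiliary $W$-vertex introduced by a forcing or aggregation gadget must have its partition placement uniquely determined once the truth assignment is fixed. If the straightforward NP-hardness construction fails parsimony, the standard remedies are to replace ambiguous $W$-vertices with rigidly assigned chains, or to post-process by dividing the partition count by a fixed, easily computable symmetry factor determined by the gadgets.
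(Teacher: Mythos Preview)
Your proposal is a reasonable outline that correctly identifies both the overall strategy (reduce from SAT so that Radon partitions of $W$ encode satisfying assignments) and the central technical difficulty (controlling the iterated geodetic closure). However, it remains a plan rather than a proof: the actual gadget construction and its verification are absent, and the generic tools you suggest for taming the hull---edge subdivisions and pendant vertices---are not the mechanism the paper uses, nor is it clear they would suffice without substantial further invention.

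The paper's construction differs from your sketch in several concrete ways that are worth noting. First, rather than a pair of $W$-vertices per variable, the paper uses a single vertex $w_i$ per variable together with two global anchors $w^+,w^-$; the side of $w_i$ relative to $w^+$ encodes the truth value, and a clique on $W^0\cup\{w^+,w^-\}$ forces $w^+,w^-$ into opposite parts. Second, the paper preprocesses $\Phi$ so that every clause is monotone (all positive or all negative literals), which cleanly separates the graph into a ``positive'' region $V^+$ and a ``negative'' region $V^-$. Third---and this is the key technical idea your proposal is missing---every auxiliary vertex in $V^+$ is made adjacent to $w^+$ (and symmetrically for $V^-$). This forces all shortest paths within $V^+$ to have length at most $2$, which gives complete control over which vertices can be pulled into the hull at each extension step; the paper organizes this via an integer ``height'' so that the $t$th extension step adds only height-$t$ vertices. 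The clause and AND gadgets are then straightforward, and the two hulls can meet only at a single apex vertex $v_0$, which lies in both exactly when all clauses are satisfied. This hub-adjacency trick is what replaces the delicate distance calibration you anticipate, and it makes the verification short.

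For the counting statement, the paper's reduction is already parsimonious: the monotone preprocessing introduces forced variables $y_i=\neg x_i$, and the graph construction admits no auxiliary freedom once the placement of the $w_i$ is fixed, so satisfying assignments of $\Phi$ and Radon partitions of $W$ are in bijection. Your plan to enforce parsimony is sound in spirit, but again the work is in the details you have not supplied.
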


As a corollary to this result, we present a novel proof that it is NP-hard to compute the Radon number of the geodetic convexity space induced by a general graph, proven recently by Coelho et al.~\cite{coelho2015}.

%%%%%%%%%%%%%%%%%%%%%%%%%%%%%%%%%%%%%%%%%%%%%%%%

\section{Algorithms for classic Tverberg partitions}
\label{sec:classic}

We make use of Carath\'eodory's theorem in order to find Tverberg partitions.

\begin{theoremp}[Carath\'eodory's theorem \cite{Car07}]
Given a set $X \subset \rr^d$ and a point $z \in \conv (X)$, there is a subset $C$ of at most $d+1$ points of $X$ such that $z \in \conv(X)$.
\end{theoremp}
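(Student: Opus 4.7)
The plan is to prove Carath\'eodory's theorem by a one-step linear-algebra reduction that shortens any convex representation of $z$ until it uses at most $d+1$ points. Write $z = \sum_{i=1}^m \lambda_i x_i$ as an arbitrary convex combination with $x_i \in X$, $\lambda_i > 0$, and $\sum_i \lambda_i = 1$. If $m \le d+1$ the set $C = \{x_1,\dots,x_m\}$ already works, so I will assume $m \ge d+2$ and show that the representation can always be shortened by one term.

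The key observation to exploit is that the $m-1$ difference vectors $x_2 - x_1,\dots,x_m - x_1$ live in $\rr^d$ but number at least $d+1$, so they are linearly dependent. This produces scalars $\mu_2,\dots,\mu_m$, not all zero, with $\sum_{i=2}^m \mu_i(x_i - x_1) = 0$; setting $\mu_1 := -\sum_{i \ge 2}\mu_i$ yields coefficients satisfying $\sum_i \mu_i x_i = 0$ and $\sum_i \mu_i = 0$. For any scalar $t$ we then have
\[
z = \sum_{i=1}^m (\lambda_i - t\mu_i)\, x_i, \qquad \sum_{i=1}^m (\lambda_i - t\mu_i) = 1,
\]
so the entire task reduces to choosing $t$ so that every coefficient stays nonnegative while at least one becomes zero.

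Because $\sum_i \mu_i = 0$ and the $\mu_i$ do not all vanish, at least one $\mu_i$ is strictly positive, so I can take $t = \min\{\lambda_i/\mu_i : \mu_i > 0\}$. This value is the largest for which all coefficients remain in $[0,\infty)$, and by construction it forces equality at the minimizing index. The resulting expression writes $z$ as a convex combination of at most $m-1$ points of $X$, and iterating this reduction eventually leaves at most $d+1$ points, giving the desired $C$. The argument is entirely elementary; the only item that deserves care is verifying that each reduction actually makes progress, and this is guaranteed by the existence of a strictly positive $\mu_i$, which in turn comes from the degree-$0$ constraint $\sum_i \mu_i = 0$ together with nontriviality.
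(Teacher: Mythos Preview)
Your argument is the standard linear-algebra proof of Carath\'eodory's theorem and is correct as written: the affine dependence among $m \ge d+2$ points furnishes the $\mu_i$, the condition $\sum_i \mu_i = 0$ with nontriviality guarantees some $\mu_i > 0$, and your choice of $t$ drops at least one term while preserving convexity.

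However, there is nothing to compare against: the paper does \emph{not} prove Carath\'eodory's theorem. It is stated as a classical result with the citation \cite{Car07} and no proof. The proof that immediately follows in the paper is for Lemma~\ref{lemma-caratheodory}, a separate statement asserting that the subset $C$ can be \emph{found algorithmically} in weakly polynomial time via linear programming. That lemma takes Carath\'eodory's theorem as given and addresses only the computational question of locating $C$ efficiently. So your proof is correct and self-contained, but it answers a different question from anything the paper itself proves.
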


\begin{lemma}\label{lemma-caratheodory}
Given $X$ and $z$, we can find $C$ using linear programming in time weakly polynomial in $|X|$ and $d$.
\end{lemma}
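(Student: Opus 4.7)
The plan is to reduce the constructive Carath\'eodory problem to finding a basic feasible solution of a standard-form linear program, which can be done in weakly polynomial time by any of the well-known LP algorithms (ellipsoid, interior point, etc.).

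First I would set up the following feasibility LP in variables $\lambda_x$ indexed by $x \in X$:
\begin{align*}
\sum_{x \in X} \lambda_x x &= z, \\
\sum_{x \in X} \lambda_x &= 1, \\
\lambda_x &\ge 0 \quad \text{for all } x \in X.
\end{align*}
The hypothesis $z \in \conv(X)$ guarantees that this system is feasible. Note that the system has exactly $d+1$ equality constraints: $d$ coming from the coordinates of the vector equation $\sum \lambda_x x = z$, and one from $\sum \lambda_x = 1$.

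Next I would invoke a weakly polynomial LP algorithm (e.g.\ Khachiyan's ellipsoid method, or an interior-point method followed by a standard crossover step) to produce a \emph{basic feasible solution} $(\lambda_x^*)_{x \in X}$. By the fundamental theorem of linear programming, a basic feasible solution has at most as many nonzero coordinates as there are equality constraints, i.e.\ at most $d+1$. Setting $C = \{x \in X : \lambda_x^* > 0\}$ thus yields a subset of size at most $d+1$ with $z \in \conv(C)$, witnessed by the convex combination $z = \sum_{x \in C} \lambda_x^* x$.

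The only delicate point is ensuring that the LP solver returns a vertex solution rather than a possibly non-extremal feasible point. If the chosen solver outputs a feasible $\lambda$ whose support $S$ has more than $d+1$ elements, one can iteratively reduce the support: the vectors $\{(x,1) : x \in S\} \subset \rr^{d+1}$ are linearly dependent, so by Gaussian elimination one finds $(\mu_x)_{x \in S}$, not all zero, with $\sum_{x \in S}\mu_x x = 0$ and $\sum_{x \in S}\mu_x = 0$; replacing $\lambda$ by $\lambda - t\mu$ for the largest $t$ keeping nonnegativity drives at least one coordinate to zero while preserving feasibility. Iterating at most $|X| - (d+1)$ times yields the desired $C$, and each iteration costs $\poly(|X|, d)$ arithmetic operations on numbers of polynomial bit-size, keeping the total running time weakly polynomial in $|X|$ and $d$.
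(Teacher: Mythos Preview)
Your proof is correct, but it follows a different route from the paper's. You set up the same feasibility LP, but then you exploit the fact that a basic feasible solution of a system with $d+1$ equality constraints has support at most $d+1$, and you recover such a vertex either via a crossover step or by the standard support-reduction argument (find a linear dependence among the lifted points $(x,1)$ and push along it until a coordinate vanishes). This is essentially the textbook constructive proof of Carath\'eodory's theorem, and it requires only a single LP solve plus at most $|X|-(d+1)$ Gaussian-elimination steps.

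The paper instead treats the LP solver purely as a black-box membership oracle: it repeatedly tests whether $z\in\conv(X\setminus\{x\})$ and removes $x$ whenever the answer is yes, continuing until only $d+1$ points remain. A refinement via binary search locates the at most $d+1$ ``essential'' points with $O(d\log n)$ oracle calls, giving $\tilde O(n^3)$ time. Your approach is more direct and avoids repeated LP solves, at the cost of needing either access to a vertex of the feasible polytope or the explicit support-reduction loop; the paper's approach needs nothing from the LP solver beyond a yes/no feasibility answer, which makes it more oracle-friendly but slightly less efficient in the number of LP calls.
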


\begin{proof}[Proof of Lemma \ref{lemma-caratheodory}]
Deciding whether a point $z$ is in the convex hull of a set $Y$ can be expressed by the following linear program, where the $\lambda_y$ represent variables and $y\in Y$ are $d$-dimensional constant vectors:
\begin{align*}
\sum_{y\in Y} \lambda_y y &= z\\
\sum_{y \in Y} \lambda_y &= 1 \\
\lambda_y &\ge 0\text{ for all $y\in Y$}.
\end{align*}

Linear programs can be solved in weakly polynomial time in the number $n$ of variables and the number $L$ of bits used to express the program. Specifically, Ye's interior point algorithm achieves a running time of $O(n^3L)$.  We apply this observation as follows: For each $x$ in $X$, we can check if $z$ is in $\conv(X\setminus\{x\})$. If we succeed in finding such a point, we remove $x$ from $X$ and continue. Carath\'edory's theorem guarantees that we will be left with at most $d+1$ points containing $z$ in their convex hull.

This procedure takes $\tilde{O}(n^4)$ time, with a tacit dependence on $L$; hence the algorithm is weakly polynomial. A slight improvement is achieved by the following procedure: Define a set of at most $d+1$ \emph{essential} vertices in $X$, which contain $z$ in their convex hull. If a subset of $X$ does not contain $z$ in its convex hull, then it must be missing at least one essential vertex. We can use this information to find the essential vertices by binary search: First eliminate half the vertices of $X$ and check if their convex hull contains $z$. If it does contain $z$, continue to eliminate vertices; otherwise, replace half the vertices that were eliminated and try again.  Finding each essential vertex in this fashion takes $\log n$ time, and there are at most $d+1$ of them. Therefore, we need $O(d\log n)$ applications of our LP-solver, giving total running time $\tilde{O}(n^3)$. We do not make an effort to optimize this running time further.
\end{proof}

\begin{lemma}\label{lemma-simple}
Let $\alpha$ be in $(0,1]$ and $d$ be a positive integer.  Suppose there is an algorithm that runs in time $f(\alpha, d, n)$ such that given any set $X$ of $n$ points in $\rr^d$ as input, it gives an $\alpha$-centerpoint $z \in \rr^d$ of $X$.  In other words, every halfspace containing $z$ contains at least $\alpha n$ points of $X$.  Assume as well that given a set $X$ of $n$ points in $\rr^d$ and $x \in \conv (X)$, there is an algorithm that runs in time $\beta  (n,d)$ and gives a set $C \subset X$ of at most $d+1$ points such that $x \in \conv (C)$.

Then, there is an algorithm that, given a set $X$ of $n$ points in $\rr^d$, finds a Tverberg partition of size $\left \lceil \frac{\alpha n}{d}\right\rceil$ in time $O\left( f(\alpha, d, n)+\beta(n,d) \cdot \alpha n\cdot \poly (d) \right)$.
\end{lemma}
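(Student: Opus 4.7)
The plan is to first compute an $\alpha$-centerpoint $z$ of $X$ using the hypothesized centerpoint subroutine, and then to extract the parts of the Tverberg partition one by one by greedily applying the Carath\'eodory subroutine to peel off small subsets that still contain $z$ in their convex hulls. Concretely, after obtaining $z$ in time $f(\alpha, d, n)$, I would initialize $X_0 := X$ and, for $i = 1, 2, \ldots, K$ with $K := \lceil \alpha n / d \rceil$, invoke the Carath\'eodory subroutine on $(X_{i-1}, z)$ to obtain $C_i \subseteq X_{i-1}$ of size at most $d+1$ with $z \in \conv(C_i)$, then set $X_{i} := X_{i-1} \setminus C_i$. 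Any points still in $X_K$ would be appended to $C_1$; this preserves $z \in \conv(C_1)$, and the resulting family $\{C_1 \cup X_K, C_2, \ldots, C_K\}$ is the desired $K$-Tverberg partition with common point $z$.

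The correctness hinges on the invariant $z \in \conv(X_{i-1})$ for every $i$ in the loop, which is exactly what is needed for the Carath\'eodory subroutine to succeed. By the $\alpha$-centerpoint property (Definition~\ref{def:centerpoint}), $z$ lies in the convex hull of any subset of $X$ obtained by deleting fewer than $\alpha n$ points. Since each Carath\'eodory step removes at most $d+1$ points, after $i-1$ iterations at most $(i-1)(d+1)$ points have been removed, so the invariant persists as long as $(i-1)(d+1) < \alpha n$. The main obstacle I anticipate is pushing the iteration count all the way up to $\lceil \alpha n / d \rceil$ rather than the more conservative $\lceil \alpha n / (d+1) \rceil$ that the worst-case count directly yields; this likely demands a sharper amortization, for example tracking iterations in which Carath\'eodory genuinely returns fewer than $d+1$ points, or distributing the ``deletion budget'' more carefully across parts.

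Finally, the running-time estimate will be a direct summation. The centerpoint computation costs $f(\alpha, d, n)$; each of the $K = O(\alpha n / d)$ iterations costs one Carath\'eodory call at $\beta(n,d)$ plus $\poly(d)$ overhead for bookkeeping (removing chosen points, maintaining indices, and writing out the partition); and appending the leftover points is linear. Combining these contributions yields the claimed bound $O( f(\alpha, d, n) + \beta(n, d) \cdot \alpha n \cdot \poly(d))$.
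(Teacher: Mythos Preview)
Your overall strategy---compute the centerpoint once, then repeatedly peel off Carath\'eodory sets---is exactly the paper's approach, and your running-time accounting is fine. The gap is precisely the one you flagged: as written, your invariant only guarantees $\lceil \alpha n/(d+1)\rceil$ rounds, not $\lceil \alpha n/d\rceil$, and the vague remedies you propose (tracking when $|C_i|<d+1$, redistributing the deletion budget) do not close it.

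The missing idea is to track the \emph{halfspace depth} of $z$ rather than the raw count of deleted points, and to make each $C_i$ \emph{minimal}. After the Carath\'eodory call returns $C$ with $|C|\le d+1$, spend $\poly(d)$ time checking whether $z\in\conv(C\setminus\{c\})$ for some $c\in C$; if so, drop $c$ and repeat until $C$ is minimal. Now the key geometric fact: for a minimal $C$ with $z\in\conv(C)$, any closed halfspace $H^+$ whose boundary hyperplane passes through $z$ satisfies $|H^+\cap C|\le d$. (If all of $C$ lay in $H^+$, then since $z$ is a convex combination of $C$ with strictly positive coefficients, every point of $C$ would lie on the boundary hyperplane; but $d+1$ points in a hyperplane are affinely dependent, contradicting minimality when $|C|=d+1$.) Consequently, deleting $C$ reduces $\deep(z,X)$ by at most $d$, not $d+1$, and since $\deep(z,X)\ge\alpha n$ initially, you get $\lceil \alpha n/d\rceil$ rounds. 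The extra minimization step is also what justifies the $\poly(d)$ factor in the stated running time.
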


\begin{proof}[Proof of Lemma \ref{lemma-simple}]
Given a set $X$ and a point $z$, denote by $\deep (z,X)$ the minimum number of points in any halfspace containing $z$. We begin by applying the first algorithm to find a point $z$ such that $\deep (z,X) \ge \alpha n$, using time $f(\alpha, d, n)$. Then, as long as $\deep (z,X) >0$, we can find a set $C$ of at most $d+1$ points such that $z \in \conv (C)$, using the second algorithm.  We may assume that if $|C|=d+1$, it is minimal, by checking if $z$ lies in the convex hull of any $d$-element subset of $C$.  Notice that for each closed halfspace $H^+$ such that its boundary hyperplane contains $z$, we have $H^+ \cap C \le d$.  Removing $C$ from $X$ reduces $\deep (z, X)$ by at most $d$.  Thus, we can repeat this process at least $\frac{\alpha n}{ d}$ times.  The family of sets $C$ obtained in each step gives us a Tverberg partition that intersects in $z$, as desired.
\end{proof}

Notice that if finding the point $z$ employs an algorithm with a probability $\varepsilon$ of failing, then the overall algorithm also has probability $\varepsilon$ of failing.

Theorems \ref{theorem-centerpoint} and \ref{theorem-nonpolynomial-centerpoint} follow from the method above, given the algorithms for centerpoints presented in Clarkson et al.~\cite{Clarkson:1996ixba}.  In order to obtain the results for the integer lattice, we generalize the methods of Clarkson et al.~using a notion of centerpoints with integer coordinates.  The existence of these can be settled by using Doignon's theorem, an integer counterpart to Helly's theorem which was later rediscovered twice with integer optimization in mind \cite{Bell:1977tm, Scarf:1977va}.

\begin{theorem}[Doignon's theorem \cite{Doignon:1973ht}]
Let $\mathcal{F}$ be a finite family of convex sets in $\rr^d$.  If every $2^d$ sets in $\mathcal{F}$ have an integer point in common, then so does the entire family $\ff$.
\end{theorem}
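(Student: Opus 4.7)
The plan is to prove Doignon's theorem by induction on the cardinality $|\mathcal{F}|$, exploiting the fact that $\Z^d/2\Z^d$ has exactly $2^d$ parity classes.

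The base case $|\mathcal{F}| \le 2^d$ is the hypothesis itself. For the inductive step, I would assume $n = |\mathcal{F}| > 2^d$ and that the statement holds for all strictly smaller families. If the theorem failed for $\mathcal{F}$, I would pass to a minimal subfamily (still denoted $\mathcal{F}$) for which $\bigcap \mathcal{F} \cap \Z^d = \emptyset$; by minimality, for every $C_i \in \mathcal{F}$ the intersection $\bigcap_{j \ne i} C_j$ contains an integer point $q_i$, and necessarily $q_i \notin C_i$ (otherwise $q_i$ would lie in $\bigcap \mathcal{F}$, contradicting minimality).

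Next comes the parity pigeonhole: since $n > 2^d$, there exist indices $i \ne j$ with $q_i \equiv q_j \pmod{2}$. These two points cannot coincide (else the common value would lie in $\bigcap \mathcal{F}$), so $q := \tfrac{1}{2}(q_i + q_j)$ is an integer point distinct from both. For every $k \notin \{i,j\}$, both $q_i$ and $q_j$ belong to $C_k$, hence convexity gives $q \in C_k$. Thus $q$ lies in every member of $\mathcal{F}$ except possibly $C_i$ and $C_j$.

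The main obstacle is that $q$ need not lie in $C_i$ or $C_j$: the segment from $q_i$ to $q_j$ crosses the boundary of $C_i$ (because $q_j \in C_i$ while $q_i \notin C_i$) and similarly for $C_j$, so the midpoint may land on the wrong side. To close this gap I would choose the $q_i$'s extremally---for instance, picking each $q_i$ to minimize a potential function such as a measure of lattice distance from $q_i$ to $C_i$, or taking $q_i$ lexicographically smallest among integer points of $\bigcap_{k \ne i} C_k$. Then if $q$ were to fall outside $C_i$, one shows that $q$ itself is a legal candidate for the role of $q_i$ with strictly smaller potential, contradicting the extremal choice; the symmetric argument applies to $C_j$. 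Pinpointing a potential that is genuinely strictly improved by the midpoint construction is the delicate technical core; once this is established, $q \in \bigcap \mathcal{F}$ follows, contradicting minimality and completing the induction.
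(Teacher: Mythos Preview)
First, a framing remark: the paper does not prove Doignon's theorem. It is quoted from the literature (reference \cite{Doignon:1973ht}) and used as a black box to obtain Lemma~\ref{lemma-doignon}, so there is no in-paper proof to compare against. That said, your outline is the classical parity--pigeonhole approach, and you have correctly isolated the one genuine difficulty.

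The mechanism you propose for closing the gap, however, does not work as written. You assert that ``if $q$ were to fall outside $C_i$, one shows that $q$ itself is a legal candidate for the role of $q_i$.'' A legal candidate for $q_i$ must lie in $\bigcap_{k\ne i}C_k$, hence in particular in $C_j$. But it is entirely possible that the midpoint $q$ lies outside \emph{both} $C_i$ and $C_j$: on the segment $[q_i,q_j]$ the convex set $C_i$ meets it in a sub-interval containing $q_j$, and $C_j$ in a sub-interval containing $q_i$, and nothing forces these two sub-intervals to cover the midpoint. In that event $q$ is not a legal replacement for either $q_i$ or $q_j$, so no potential that is computed coordinate by coordinate---lattice distance from $q_i$ to $C_i$, lexicographic minimality of $q_i$ within $\bigcap_{k\ne i}C_k$, and the like---can be improved, because the midpoint is simply not in the feasible set for either coordinate. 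Your sketch implicitly assumes that at most one of $C_i,C_j$ excludes $q$, and that assumption is false.

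The standard proofs close the gap with a global rather than coordinatewise extremal choice, together with a structural observation you have not used: since $\{q_j:j\ne i\}\subset C_i$ while $q_i\notin C_i$, a separating hyperplane shows that each $q_i$ is automatically a \emph{vertex} of $P=\conv\{q_1,\dots,q_n\}$, so the midpoint is never one of the $q_k$. One then minimizes $|P\cap\Z^d|$ over all valid tuples and checks that whenever a valid replacement exists it strictly decreases this count (the displaced vertex leaves the new hull). The ``both outside'' case still requires an extra maneuver---walking along the segment to locate a different lattice replacement, or first reducing to half-space constraints---and this is exactly the ``delicate technical core'' you acknowledge but do not supply. As it stands, the proposal is a correct plan with its hardest step left open.
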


\begin{lemma}\label{lemma-doignon}
Given any finite set $S$ of integer points in $\rr^d$, there is an integer point $z$ such that every halfspace containing $z$ has at least $\left\lceil \frac{|S|}{2^d} \right\rceil$ points of $S$.
\end{lemma}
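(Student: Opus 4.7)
The plan is to adapt the classical Helly-based proof of the centerpoint theorem, replacing Helly's theorem with Doignon's. Consider the finite family of convex subsets of $\rr^d$,
\[
\mathcal{F} = \{\conv(T) : T \subseteq S,\ |T| > (1 - 2^{-d})|S|\}.
\]
Any integer point $z$ lying in every member of $\mathcal{F}$ will be the desired centerpoint, so the task reduces to verifying the hypothesis of Doignon's theorem on $\mathcal{F}$.

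The key step is to show that any $2^d$ members $\conv(T_1), \ldots, \conv(T_{2^d})$ of $\mathcal{F}$ share a common integer point. A union-bound count yields
\[
\left| S \setminus \bigcap_{i=1}^{2^d} T_i \right| \leq \sum_{i=1}^{2^d} |S \setminus T_i| < 2^d \cdot 2^{-d}|S| = |S|,
\]
so $\bigcap_i T_i$ is a nonempty subset of $S$; any of its elements is already an integer point, and it lies in every $\conv(T_i)$. Doignon's theorem then produces an integer point $z$ common to every set in $\mathcal{F}$.

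To finish, suppose for contradiction that some closed halfspace $H^+$ contains $z$ with $|H^+ \cap S| < \lceil |S|/2^d \rceil$. Set $T = S \setminus H^+$. Then
\[
|T| \geq |S| - \lceil |S|/2^d \rceil + 1 > (1 - 2^{-d})|S|,
\]
using the elementary bound $\lceil x \rceil < x + 1$. Hence $\conv(T) \in \mathcal{F}$ and $z \in \conv(T)$. But $T$ is contained in the open halfspace complementary to $H^+$, which is convex, so $\conv(T) \cap H^+ = \emptyset$, contradicting $z \in H^+$.

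There is no genuine obstacle here: the argument is a clean analog of the Helly proof, and the arithmetic works out because Doignon's constant $2^d$ exactly matches the reciprocal of the target centerpoint depth. The main thing worth noting is that the \emph{integrality} of the common intersection point comes for free, since the witness point for any $2^d$-wise intersection can be chosen in $S$ itself.
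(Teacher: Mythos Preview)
Your proof is correct and follows essentially the same route as the paper's: both define the family $\mathcal{F}$ of convex hulls of large subsets of $S$, invoke Doignon's theorem to get an integer point $z\in\cap\mathcal{F}$, and derive the depth bound by contradiction. You simply fill in two details the paper leaves implicit --- the union-bound verification that any $2^d$ members of $\mathcal{F}$ share a point of $S$ (hence an integer point), and the ceiling arithmetic in the final contradiction.
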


\begin{proof}[Proof of Lemma \ref{lemma-doignon}]
Consider the family
\[
\mathcal{F} = \left\{\conv(K) : K\subset S, \ |K|>\frac{(2^d-1) }{2^d}|S|\right\}.
\]
By construction, $\mathcal{F}$ satisfies the conditions of Doignon's theorem.  Let $z$ be an integer point in $\cap\mathcal{F}$.  If a halfspace containing $z$ had fewer than $\frac{|S|}{2^d}$ points of $S$, it would mean that $z$ would be separated by a hyperplane by a subset of $S$ of cardinality strictly greater than $\frac{(2^d-1) }{2^d}|S|$, contradicting the fact that it lies in $\cap \mathcal{F}$.
\end{proof}

In order to construct integer centerpoints in a manner analogous to \cite{Clarkson:1996ixba}, we recall the notion of $\lambda$-samples.

\begin{definition}
Given a set $S\subset \rr^d$ with $n$ points, we call $S' \subset S$ a $\lambda$-sample if for any halfspace $H$ with $|H \cap S| \ge 4\lambda n$, we also have
\[
\frac{|H \cap S'|}{|S'|} \le \frac{|H \cap S|}{n}+\lambda.
\]
\end{definition}

In particular, a $\beta$-center of a $\lambda$-sample of $S$ is immediately a $(\beta - \lambda)$-center of $S$, for $\beta \ge 4\lambda$.  In \cite{Clarkson:1996ixba}, it was shown that a random sample of sufficiently large size, but depending only on $d,\lambda, \varepsilon$, is a $\lambda$-sample with probability at least $1-\varepsilon$.

Thus, given a set of $n$ integer points and $0<\lambda \le 2^{-(d+2)}$, we can find an integer $\left(\frac{1}{2^{d}}-\lambda\right)$-center with probability at least $1-\varepsilon$ in the following manner: First, we find a $\lambda$-sample $S'$ of the set with probability $1-\varepsilon$.  Then, finding an integer $\frac{1}{2^d}$-center for $S'$ is an integer programming problem where the number of constraints is fixed (as $d, \lambda, \varepsilon$ are fixed) so it can be solved in polynomial time. We know that a solution exists by Doignon's theorem, and the solution must be a $\left(\frac{1}{2^d}-\lambda\right)$-center of the original set, as desired.

%%%%%%%%%%%%%%%%%%%%%%%%%%%%%%%%%%%%%%%%%%%%%%%%

\section{Algorithms for geodetic Tverberg partitions}
\label{sec:geodetic-positive}

In this section we present algorithms for Tverberg partitions in the geodetic convexity spaces of certain graphs. The partition conjecture was already established for trees by Jamison \cite{Jamison:1981wz}; our proof of Theorem \ref{theorem-for-trees} gives the optimal Tverberg number and shows how such a partition may be found algorithmically.

\begin{proof}[Proof of Theorem \ref{theorem-for-trees}]
Let $p$ be an arbitrary vertex of the tree $G$.  Suppose that removing $p$ splits $G$ into $r$ connected components $G_1, G_2, \ldots, G_r$, and let $a_i = |G_i \cap U|$ for all $i$.  We first show that we can find $p$ such that $a_i \le k$ for all $i$.  If this is not the case, assume without loss of generality that $a_1 \ge k+1$.  Notice that there is a vertex $p'$ in $G_1$ which is adjacent to $p$, so we replace $p$ by $p'$.  By doing this, $G_2, G_3, \ldots , G_r$ and $\{p\}$ merge into one big component, but since they had altogether at most $k$ points of $U$, they do not cause a problem.  $G_1 \setminus \{p\}$ may be split into more than one component, but at most one of them has more than $k$ points of $S$.  If this is the case, call it $G_1'$.  Since $G_1'$ has fewer vertices than $G_1$, this process must end after at most $|G|$ iterations.

In order to find the partition, define $G_0 = \{p\}$.  It suffices to notice that if we take two vertices of $U$ in different components $G_i$, $G_j$, their convex hull must contain $p$.  For any partition of a set of $2k$ vertices in which no part has more than $k$ vertices, there is a way to pair up points such that no pair lies in the same part.  Therefore, we obtain the desired result.  The desired pairing can be also be found algorithmically.  Simply take the two parts with the largest number of vertices $G_1$, $G_2$, pick two arbitrary vertices $v_1 \in G_1, v_2 \in G_2$, make them an edge of the matching and set them aside.  Then, continue in the same way with the rest of the graph.  This process ensures that at every stage no component has more than half of the remaining vertices, so we can continue until the graph is empty.

Let us now consider the time required to find a vertex $p$ of the form desired. For each vertex $q$ of $G$ let $u(q)$ be the tuple corresponding to the numbers of elements of $U$ in the components of $G\setminus \{q\}$.  We can identify $u(q)$ recursively for every vertex $q\in G$, as follows. For $q$ a leaf of $G$, the tuple $u(q)$ is simply $(k)$.  Starting from leaves of $G$, we work inward, updating $u(q)$ when we reach either a branch point within $G$ or an element of $U$.  Since each vertex need be considered only once, the time required is linear; this procedure must identify a vertex $p$ of the form desired.
\end{proof}

In order to show that the value $2k$ is optimal, take any set of size $2k-1$ of leaves of a tree.  Since any partition of them into $k$ sets must include a singleton, there is no Tverberg partition into $k$ parts.  From this theorem, we may verify Eckhoff's partition conjecture in the case of geodetic convexity spaces on trees:
$$r_k = 2k \le 3(k-1)+1 = (r_2-1)(k-1)+1.$$

One must beware of the special case in which the Radon number of the tree is less than four.  However, this only happens with trees with exactly two leaves, reducing to the case of Radon on the real line.  It should be stressed that the algorithm presented here is similar to those in the previous section.  Indeed, our point $p$ is a centerpoint for $U$ in the sense of Definition \ref{def:centerpoint}, as the convex hull of any subset of $U$ of cardinality greater than $|U|/2$ contains $p$. The same idea is extended to another family of graphs in Theorem \ref{theorem-for-2-sep}. In this case instead of a single centerpoint, we look for a pair of vertices that in some sense serve jointly as centerpoints.

\begin{proof}[Proof of Theorem \ref{theorem-for-2-sep}]
If we could show that there are two vertices $x,y$ such that no connected component of $G\setminus \{x,y\}$ has more than $2k-1$ vertices of $U$, we would be done.  This is because in that case we could find a partition of $U$ into $2k-1$ pairs such that the convex hull of each pair contains at least one of $x,y$.  By the pigeonhole principle, this would imply that at least $k$ pairs intersects in either $x$ or $y$, yielding the partition we seek.

Notice that in a cactus graph, two different cycles cannot share more than one vertex.  Otherwise, we would be able to find two vertices with three edge-disjoint paths connecting them.

We construct an auxiliary graph $G'$ as follows.  First, we color the graph $G$ blue.  Then, consider any vertex $v$ that is in more than one cycle.  Notice that these cycles are pairwise edge-disjoint.  If $v$ is in $n$ cycles, we replace $v$ by $n$ blue vertices $v_1, v_2, \ldots , v_n$ and set $v_i$ adjacent to the vertices that $v$ was adjacent to in the $i$-th cycle.  Then, we include a red vertex $v_0$ which is connected to $v_1, v_2, \ldots, v_n$.  An example is provided in Figure \ref{fig:connected-tverberg}.  We see that $G'$ is a connected cactus graph.  Moreover, in $G'$ no two cycles share a vertex.  

\begin{figure}
\centering
\includegraphics[scale=0.7]{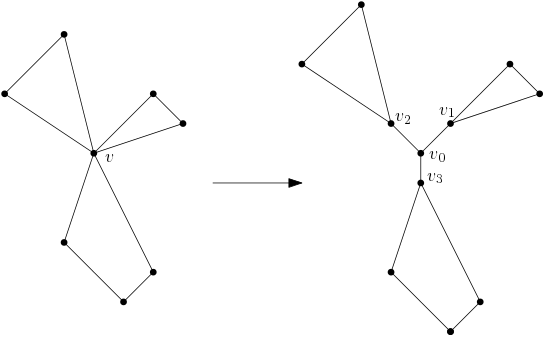}
\caption{The construction of $G'$ from $G$.}
\label{fig:connected-tverberg}
\end{figure}

Let us show that no red vertex of $G'$ is in a cycle.  Suppose towards contradiction that $v_0$ is a red vertex which is in a cycle $L_r$.  Notice that in the cycle $v_0$ must be adjacent to some $v_i$, which in turn is in a blue cycle $L_b$.  If we follow $L_r$ in the direction from $v_i$ to $v_0$, at some point it must return to $L_b$.  However, it cannot return to $L_b$ through $v_i$, so the first time it gets back to $L_b$ it must be using a different vertex $x$.  Notice that this induces a path in $G$ from $v$ to $x$ which does not use vertices of $L_b$, and contradicts the fact that $G$ is a cactus graph.  Hence, no red vertex of $G'$ can lie in a cycle.

Along with $G'$, we define a copy $U'$ of $U$.  If $v$ was a vertex of $U$ in at most one cycle of $G$, we set its corresponding copy in $G'$ to be part of $U'$.  If $v$ was a member of $U$ in more than one cycle, we set only $v_0$ (the red copy) to be part of $U'$.  If we can find two vertices in $G'$ such that removing them leaves only connected components with at most $2k-1$ points of $U'$ each, the corresponding pair (or single vertex) in $G$ would also work, as the only change could be that $G$ would be shattered into more connected components.

Now we construct a second auxiliary graph $G''$ from $G'$.  For each cycle in $G'$, include a vertex in $G''$, and for every vertex in $G'$ which is contained in no cycle, include a vertex in $G''$.  Set $u$, $v$ in $G''$ to be adjacent if there is an edge connecting some vertex corresponding to $u$ and some vertex corresponding to $v$.  Since $G'$ is connected, so is $G''$.  Let us show that $G''$ has no cycles.

If there were a cycle $L'$ in $G''$, all the vertices of $L'$ would correspond to cycles in $G'$.  Take $u$ one of its vertices.  Using the cycle $L'$, we can find a cycle in $G'$ different from the one corresponding to $u$ which intersects it.  This contradicts the construction of $G'$, in which no two cycles share a vertex.

Thus, $G''$ is a tree.  Now to each vertex in $G''$ we assign the number of vertices in $U'$ that correspond to it.  Following the same method as in the proof of Theorem \ref{theorem-for-trees}, we can find a vertex $p$ in $G''$ the removal of which leaves connected components with at most $2k-1$ vertices of $U'$ each.  If $p$ corresponds to a vertex in $G'$, removing it (and any other point $y$) gives us the partition we seek.

If $p$ corresponds to some cycle $L$ in $G'$, we have to do more work.  For this, we can represent the vertices of $L$ as the vertices of a regular $|L|$-gon in the plane in the order of the cycle.  To each vertex $v$ of $L$ we assign two numbers.  The first, $\mu_1(v)$ is $0$ or $1$ depending on whether $v$ is in $U'$ or not.  The second, $\mu_2(v)$, is computed as follows.  Notice that when we removed $p$ from $G''$, we had several connected components left.  Each of these corresponds to a connected component in $G'$ which is connected to exactly one point of $L$.  We take $\mu_2(v)$ to be the number of points of $U'$ contained in connected components of $G'\setminus L$ which were connected to $v$ in $L$.

Notice that $\mu_1$ and $\mu_2$ are two discrete finite measures in the plane whose sum is $4k-2$.  By the discrete version of the ham sandwich theorem, there is a line $\ell$ which leaves at most half of each measure in each halfspace.  We may assume without loss of generality that $\ell$ contains exactly two points $x,y$ of $L$.  Let us show that this choice of $x,y$ satisfies the conditions we seek.

We let $k_1,k_2$ be the total measures of $\mu_1,\mu_2$, respectively, so that $k_1+k_2=4k-2$.  When we remove the points $x,y$, we are left with at most two connected components which contain points of $L$.  By the construction of $\ell$, these have no more than $\lfloor k_1/2\rfloor+\lfloor k_2 /2 \rfloor$ points of $U'$ each.  We also have the connected components which were connected only to $x$ or only to $y$, but these each have at most $2k-1$ points of $U'$ by the construction of $L$.  Thus, removing $x,y$ gives us the partition we want, since each part contains at most $\max\{\lfloor k_1/2\rfloor+\lfloor k_2 /2 \rfloor, 2k-1\}=2k-1$.

Notice as well that we may even allow for $U$ to have repeated vertices, and the only change would be in allowing the vertices of $G'$ to preserve such multiplicities in $U'$, and in the definition of $\mu_2$.

Let us now consider the time required to find vertices $x,y$ of the form desired. For each pair of vertices $(v,w)$ of $G$, let $u(v,w)$ be the tuple corresponding to the numbers of elements of $U$ in the components of $G\setminus \{v,w\}$.  We claim that for each fixed $v$, identifying all $u(v,w)$, as $w$ varies, takes time at most linear in $n$.

Let $G_v = G\setminus \{v\}$. Observe that for each vertex $w$ of $G_v$, the tuple $u(v,w)$ is trivial if $w$ is in a cycle within $G_v$, since deleting $w$ gives only one connected component.  We therefore proceed by collapsing all cycles in $G_v$ to single vertices, where all edges incident to vertices of the cycle are incident to the new \emph{supervertex}.  The set $U$ becomes a multiset in which each supervertex may occur multiple times, according to the membership in $U$ of the vertices that were collapsed.

In order to collapse cycles in $G_v$, we begin by finding a spanning tree $T$ of $G_v$, which takes linear time. For each edge of $G_v$ not in $T$, we identify the path within $T$ between endpoints of the edge; this takes time linear in the number of vertices along this path.  We then collapse the path to a single vertex. Since each edge in $G_v$ is collapsed at most once, we can collapse all cycles in this way in time linear in the number of edges of $G_v$.  It is a classic result that $G$ has $O(n)$ edges since it is a cactus graph; therefore, this time is linear in $n$.

After collapsing cycles, $G_v$ has been transformed into a tree, and we can proceed as in the proof of Theorem \ref{theorem-for-trees} to identify $u(v,w)$ for all remaining vertices $w$. Thus, letting $v$ vary, we are able to find $u(v,w)$ for all pairs $(v,w)$ in $O(n^2)$ time, as desired.
\end{proof}

%In general, bounded connectedness in graphs implies intersection of convex sets, so it is a natural replacement for the dimension in Tverberg-type theorems.  Given an integer $d$, we can say a graph $G$ is at most $d$-separable if for any two vertices $u,v$ in $G$ there is a set of at most $d$ edges such that removing them separates $u$ from $v$ in $G$.  We conjecture that the results above hold for graphs with bounded connectedness.

%\begin{conjecture}
%Let $G$ be a graph which is $d$-separable and $U$ a subset of its vertices.  Then there is a subset $C$ of at most $d$ vertices of $G$ such that every connected component of the induced graph $G \setminus C$ has no more than $|U|/2$ vertices of $U$.
%\end{conjecture}

%In particular, this would imply a general Tverberg type theorem.

%\begin{conjecture}
%Let $G$ be a graph which is $d$-separable and $U$ a subset of $2[d(k-1)+1]$ vertices of $G$.  Then, there is a Tverberg partition of $U$ into $k$ parts.
%\end{conjecture}

%%%%%%%%%%%%%%%%%%%%%%%%%%%%%%%%%%%%%%%%%%%%%%%%

\section{Hardness of general Radon partitions}
\label{sec:geodetic-negative}

\begin{proof}[Proof of Theorem \ref{thm:onepartition}.]
Given a Boolean formula $\Phi$ in conjunctive normal form, we will describe a graph $G=(V,E)$ and $W\subset V$ with size $\poly(n)$, where a Radon partition $(W^+,W^-)$ corresponds to a truth assignment satisfying $\Phi$.  This suffices to prove the theorem, as SAT is a known NP-hard problem, while counting the number of satisfying assignments is known to be \#P-hard.

\textbf{Claim.} It suffices to consider $\Phi$ where every clause contains either all positive literals or all negative literals.

To prove the claim, suppose that $\Phi$ is given in terms of variables $x_i$.  Replace every occurrence of $-x_i$ in $\Phi$ by a new variable $y_i$, and add the clauses $(x_i\vee y_i)$ and $(-x_i\vee -y_i)$ for every $i$, which ensures that $x_i$ and $y_i$ have opposite truth value.  In the new formula $\Phi'$, every clause must have either all positive literals or all negative literals; and solutions to $\Phi'$ exactly correspond to solutions to $\Phi$, as desired.

Suppose now that $\Phi$ takes the form given in the claim, with variables $x_1,\ldots,x_n$ occurring as positive literals in clauses $C^+_1,\ldots,C^+_\ell$, and occurring as negative literals in clauses $C^-_1,\ldots,C^-_m$.  For convenience, we will also suppose that $\ell,m\ge 1$; if this is not true, we may simply add a new variable and a corresponding new clause containing it.

We now outline the construction of our graph $G$ (shown in Figure \ref{fig:SAT-graph}).  The vertex set $V$ will consist of the following parts:
\begin{itemize}
\item the set $W^0=\{w_1,\ldots,w_n\}$, corresponding to the variables $x_1,\ldots,x_n$,
\item vertices $w^+$ and $w^-$, which will define which variables are True and False, respectively,
\item the set $V^+$, which will allow us to calculate which of the clauses $C^+_j$ are True,
\item the set $V^-$, which will allow us to calculate which clauses $C^-_k$ are True.
\end{itemize}

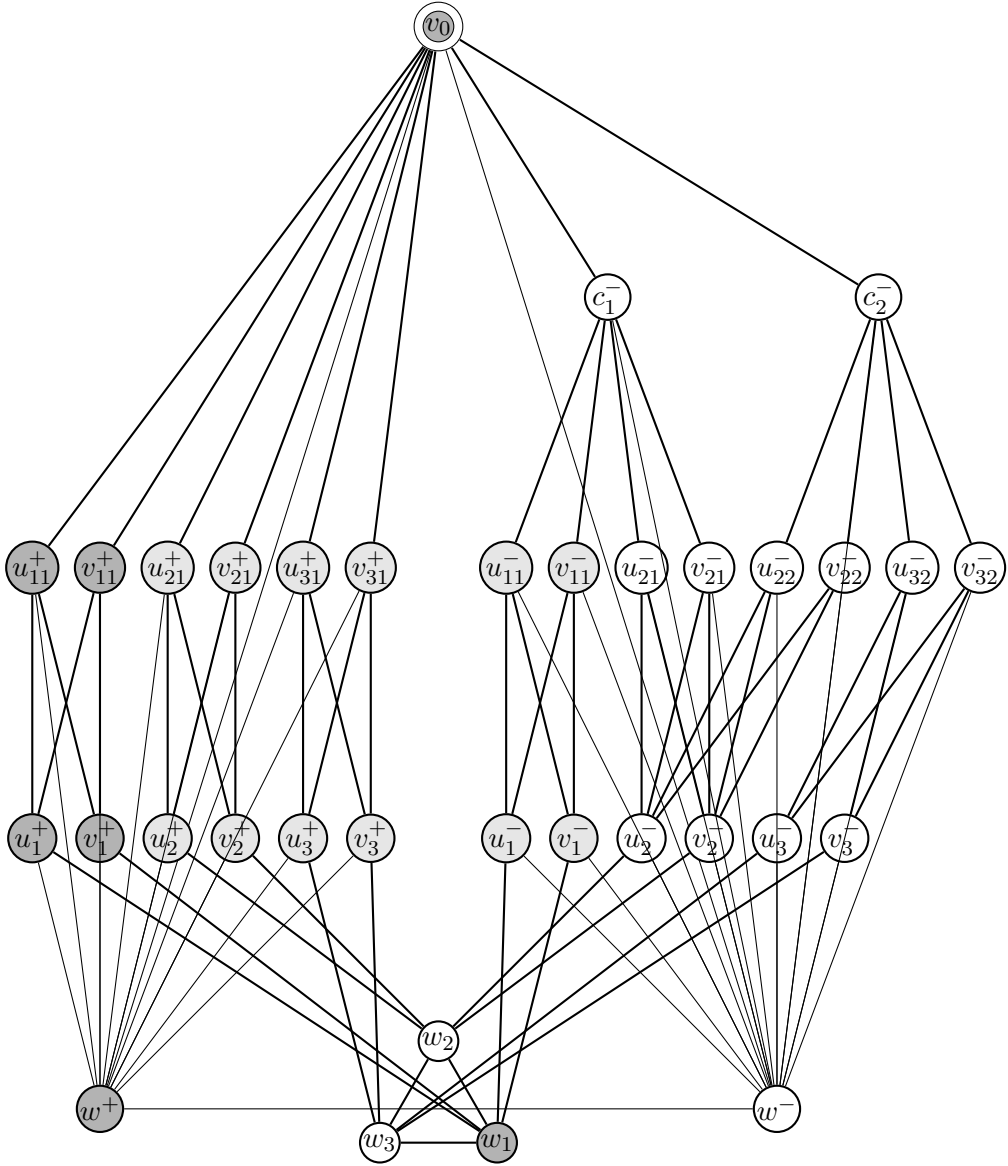
\begin{figure}
\begin{tikzpicture}[scale=0.9]
\tikzstyle{vertex}=[circle,draw,thick=black,fill=black!10,minimum size=15pt,inner sep=0pt]
\tikzstyle{vertex_light}=[circle,draw,thick=black,fill=black!0,minimum size=15pt,inner sep=0pt]
\tikzstyle{vertex_dark}=[circle,draw,thick=black,fill=black!30,minimum size=15pt,inner sep=0pt]
\tikzstyle{vertex_double}=[circle,draw,double=white,double distance=3 pt,fill=black!30,minimum size=15pt,inner sep=0pt, outer sep = 2pt]
\tikzstyle{edge} = [draw,thick,-]
\tikzstyle{dotted} = [draw,thin,-]
\node[vertex_dark] (w^+) at (-5,0) {$w^+$};
\node[vertex_light] (w^-) at (5,0) {$w^-$};
\node[vertex_double] (v_0) at (0,16) {$v_0$};
\path[dotted] (v_0) -- (w^+);
\path[dotted] (v_0) -- (w^-);
\path[dotted] (w^+) -- (w^-);
\foreach \pos/\name/\option in {{(330:1)/w_1/vertex_dark}, {(90:1)/w_2/vertex_light}, {(210:1)/w_3/vertex_light}}
    \node[\option] (\name) at \pos {$\name$};
\foreach \pos/\name/\option in {{(-6,4)/u_1^+/vertex_dark}, {(-5,4)/v_1^+/vertex_dark}, {(-4,4)/u_2^+/vertex}, {(-3,4)/v_2^+/vertex}, {(-2,4)/u_3^+/vertex}, {(-1,4)/v_3^+/vertex}, {(-6,8)/u_{11}^+/vertex_dark}, {(-5,8)/v_{11}^+/vertex_dark}, {(-4,8)/u_{21}^+/vertex}, {(-3,8)/v_{21}^+/vertex}, {(-2,8)/u_{31}^+/vertex}, {(-1,8)/v_{31}^+/vertex}}
    {\node[\option] (\name) at \pos {$\name$};
    \path[dotted] (\name) -- (w^+);}
\foreach \pos/\name/\option in {{(1,4)/u_1^-/vertex}, {(2,4)/v_1^-/vertex}, {(3,4)/u_2^-/vertex_light}, {(4,4)/v_2^-/vertex_light}, {(5,4)/u_3^-/vertex_light}, {(6,4)/v_3^-/vertex_light}, {(1,8)/u_{11}^-/vertex}, {(2,8)/v_{11}^-/vertex}, {(3,8)/u_{21}^-/vertex_light}, {(4,8)/v_{21}^-/vertex_light}, {(5,8)/u_{22}^-/vertex_light}, {(6,8)/v_{22}^-/vertex_light}, {(7,8)/u_{32}^-/vertex_light}, {(8,8)/v_{32}^-/vertex_light}, {(2.5,12)/c_1^-/vertex_light}, {(6.5,12)/c_2^-/vertex_light}}
    {\node[\option] (\name) at \pos {$\name$};
    \path[dotted] (\name) -- (w^-);}
\foreach \source/ \dest in {w_1/w_2, w_1/w_3, w_2/w_3, u_1^+/w_1, v_1^+/w_1, u_2^+/w_2, v_2^+/w_2, u_3^+/w_3, v_3^+/w_3,
    u_1^-/w_1, v_1^-/w_1, u_2^-/w_2, v_2^-/w_2, u_3^-/w_3, v_3^-/w_3, u_1^+/u_{11}^+, v_1^+/u_{11}^+, u_1^+/v_{11}^+, v_1^+/v_{11}^+, u_2^+/u_{21}^+, v_2^+/u_{21}^+, u_2^+/v_{21}^+, v_2^+/v_{21}^+, u_3^+/u_{31}^+, v_3^+/u_{31}^+, u_3^+/v_{31}^+, v_3^+/v_{31}^+, u_1^-/u_{11}^-, v_1^-/u_{11}^-, u_1^-/v_{11}^-, v_1^-/v_{11}^-, u_2^-/u_{21}^-, v_2^-/u_{21}^-, u_2^-/v_{21}^-, v_2^-/v_{21}^-, u_2^-/u_{22}^-, v_2^-/u_{22}^-, u_2^-/v_{22}^-, v_2^-/v_{22}^-, u_3^-/u_{32}^-, v_3^-/u_{32}^-, u_3^-/v_{32}^-, v_3^-/v_{32}^-, u_{11}^+/v_0, v_{11}^+/v_0, u_{21}^+/v_0, v_{21}^+/v_0, u_{31}^+/v_0, v_{31}^+/v_0, u_{11}^-/c_1^-, v_{11}^-/c_1^-, u_{21}^-/c_1^-, v_{21}^-/c_1^-, u_{22}^-/c_2^-, v_{22}^-/c_2^-, u_{32}^-/c_2^-, v_{32}^-/c_2^-, c_1^-/v_0, c_2^-/v_0}
    \path[edge] (\source) -- (\dest);
\end{tikzpicture}
\caption{The graph $G$ derived from the Boolean formula $\Phi = C_1^+\wedge C_1^-\wedge C_2^-$, with 
$C_1^+ = x_1\vee x_2\vee x_3, C_1^- = -x_1 \vee -x_2, C_2^- = -x_2 \vee -x_3$. Note that the vertex $v_0$ could also be labeled $c_1^+$, $a_1^+$, and $a_2^-$. Darkly shaded vertices are in the convex hull of $\{w^+,w_1\}$, lightly shaded vertices are in the convex hull of $\{w^-,w_2,w_3\}$, and moderately shaded vertices are in neither. Note that vertex $v_0$ is in both convex hulls, indicating that the truth assignment $\{x_1=\text{True},\,x_2=x_3=\text{False}\}$ is a solution to $\Phi$.}
\label{fig:SAT-graph}
\end{figure}

The sets $V^+$, $V^-$ overlap in the vertex $v_0$, which will allow us to test if all clauses $C_j$ and $D_k$ are satisfied.

The input $W$ in the graph-Radon problem will be defined as $W^0\cup \{w^+,w^-\}$.  The motivation is that in any $2$-partition of $W$ such that $w^+$ and $w^-$ are in different parts, we consider the variable $x_i$ to be True if $w_i$ is in the same part as $w^+$, and to be False if $w_i$ is in the same part as $w^-$.  We will eventually deal with the case where $w^+$ and $w^-$ are in the same part; however, for the moment let us assume that each of the variables is designated unambiguously True or False.

Each vertex $v$ in $V$, with the exception of $v_0$, possesses a parameter called \emph{height}, denoted $h(v)$.  The height of a vertex is a nonnegative integer, equal to $0$ for $v\in W^0\cup \{w^+,w^-\}$ and taking on various positive values for $v\in V^+$ or $v\in V^-$.  The importance of the height is as follows: Suppose we start out with a subset of the height-0 vertices and wish to find the convex hull of these vertices.  In each \emph{extension step}, we add all vertices which lie on shortest paths between existing vertices.  Then, we will design $G$ such that in the first extension step we acquire only height-1 vertices, in the second extension step only height-2 vertices, and so on.  We will never end up with more height-0 vertices than we started with, or with more height-$t$ vertices than we acquired at extension step $t$.  This means that we will be able to analyze the process of building the convex hull, one extension step at a time.

In addition to this property, we will construct $G$ with the following property: The convex hull of a subset of $W$ contains a vertex of $V^+\setminus \{v_0\}$ only if it contains $w^+$; likewise it contains a vertex of $V^-\setminus \{v_0\}$ only if it contains $w^-$.  Intuitively, $V^+$ is the part of the graph that handles True variables, and $V^-$ is the part of the graph that handles False variables.  We will now present the structure of $V^+$ and $V^-$.  Given a partition of $W$ in which $w^+$ and $w^-$ lie in different parts, we will refer to the convex hull of the part with $w^+$ as the \emph{positive convex hull}, and likewise define the \emph{negative convex hull}.

At height 1 in $G$, we define vertices corresponding to two copies of each literal:
$$u_1^+,v_1^+,u_2^+,v_2^+,\ldots,u_n^+,v_n^+\in V^+\hskip .2 in \text{and} \hskip .2 in u_1^-,v_1^-,u_2^-,v_2^-,\ldots,u_n^-,v_n^-\in V^-.$$
The vertices $u_i^+,v_i^+$ are both included in the positive convex hull if (and only if) $x_i$ is True, and $u_i^-,v_i^-$ are both included in the negative convex hull if (and only if) $x_i$ is False.  To achieve this, we simply make $u_i^+,v_i^+$ adjacent to both $w_i$ and $w^+$, while $u_i^-,v_i^-$ are adjacent to both $w_i$ and $w^-$.  Thus, the shortest path between $w_i$ and $w^+$ includes $u_i^+,v_i^+$, justifying our claim that height-1 vertices are included in the convex hull on the first extension step.

The construction of $V^+$ and $V^-$ will be exactly symmetric from this point, and we will therefore present only the construction of $V^+$.  The key here is that every vertex of $V^+$ is adjacent to $w^+$.  The reason for this is that every shortest path between vertices of $V^+$ thus has length 1 or 2.  It will be easy to ensure, therefore, that no pairs of vertices augment the convex hull in undesirable ways.

At height 2, we define vertices $u_{ij}^+,v_{ij}^+\in V^+$ for each occurrence of literal $x_i$ in clause $C^+_j$. (Likewise, we define $u_{ik}^-,v_{ik}^-\in V^-$ for each occurrence of literal $-x_i$ in clause $C^-_k$.)  We let $u_{ij}^+,v_{ij}^+$ be adjacent to $u_i^+$ and $v_i^+$. Since $u_{ij}^+,v_{ij}^+$ lie in $V^+$, they are also adjacent to $w^+$.  Then, $u_{ij}^+,v_{ij}^+$ lie on shortest paths between $u_i^+$ and $v_i^+$; thus $u_{ij}^+,v_{ij}^+$ are included in the positive convex hull at the second extension step if and only if $x_i$ is True.

At height 3, we define a vertex $c_j^+\in V^+$ for each clause $C^+_j$ such that $c_j^+$ is adjacent to $u_{ij}^+,v_{ij}^+$ for every value of $i$ such that $x_i$ occurs in $C^+_j$. Then, $c_j^+$ lies on a shortest path between $u_{ij}^+$ and $v_{ij}^+$; thus $c_j^+$ is included in the positive convex hull at the third extension step if and only if $x_i$ is True.  In essence, we have defined an OR gate for the positive literals that appear in the clause $C^+_j$.

It may seem obscure why we defined $u_{ij}^+,v_{ij}^+$, instead of letting $c_j^+$ be adjacent to $u_i^+,v_i^+$ directly.  The reason is that we must ensure that vertices of greater height do not cause vertices of lower height to be included in the convex hull.  If $c_j^+,d_j^+$ were adjacent to $u_i^+,v_i^+$, then any literal $x_i$ common to clauses $C^+_j$ and $C^+_{j'}$ would be switched to True whenever $c_j^+,c_{j'}^+$ were both included in the positive convex hull.  This is undesirable; therefore, we introduce $u_{ij}^+,v_{ij}^+$, which provide additional space and prevent $u_i^+$ from being included in shortest paths from $c_j^+$ and $c_{j'}^+$.  The reason for this is that any shortest path between $c_j^+$ and $c_{j'}^+$ must have length 2, since both vertices are adjacent to $w^+$.

Recall that we want to create a correspondence between solutions to $\Phi$ and Radon partitions of the vertices $W$.  Given such a partition, corresponding to an assignment of True/False to every variable $x_i$, we have seen that each $c_j^+$ lies in the positive convex hull if and only if $C^+_j$ is satisfied. We will now define an AND gate to determine when all of the clauses $c_j^+$ lie in the positive convex hull.

For each $j$ with $1\le j\le \ell$, recursively define vertices $a_j^+\in V^+$ at height $2+j$ as follows: Let $a_1^+=c_1^+$, and for $j\ge 1$ let $a_j^+$ be adjacent to $a_{j-1}^+$ and $c_j^+$.  Observe that $a_j^+$ lies in the positive convex hull if and only if the vertices $c_1^+,c_2^+,\ldots,c_j^+$ lie in the positive convex hull.  Likewise, we define vertices $a_k^-$ for $1\le k\le m$.  The crucial detail is that we identify $a_\ell^+$ and $a_m^-$ as a common vertex $v_0$.  (This is why the height of $v_0$ is not defined, since $2+\ell$ may not be equal to $2+m$.)  Observe that for a partition of $W$, the vertex $v_0$ lies in both the positive and negative convex hulls exactly when every $C_j^+$ and every $C_k^-$ is satisfied.  Moreover, $v_0$ is the only possible overlap between the positive and negative convex hulls. Hence, Radon partitions of $W$ correspond exactly to truth assignments that satisfy $\Phi$.

One final detail remains: We assumed that, in any partition of $W$, the vertices $w^+$ and $w^-$ lie in different parts.  In order to enforce this requirement, we let $W^0$ form a clique, and add an edge between $w^+$ and $w^-$.  For any partition of $W$ in which $w^+$ and $w^-$ lie in the same part, one of the two parts must contain only elements of $W^0$, and therefore forms a clique, which is its own convex hull.  We have defined the graph $G$ in such a way that the convex hull of $W^0\cup\{w^+,w^-\}\backslash W'$ does not overlap $W'$, for any subset $W'\subseteq W^0$.  (The reason for the edge between $w^+$ and $w^-$ is so that shortest paths between vertices of $V^+$ and $V^-$ lie through $w^+$ and $w^-$ and therefore do not intersect $W'$.)  We conclude that, in any Radon partition, the vertices $w^+$ and $w^-$ must indeed lie in different parts.  This completes our proof.
\end{proof}

Our approach suggests an alternate proof for the statement (proven in \cite{coelho2015}) that it is NP-hard to calculate the geodetic Radon number.

\begin{proof}[Proof of hardness of geodetic Radon number]
Suppose that we are given a CNF formula $\Phi$ with variables $x_1,\ldots,x_n$ and clauses $C_1,\ldots,C_m$.  Let us define a formula $\Phi^\ell$ on variables $\cup_{j=1}^\ell \{x_1^j,\ldots,x_n^j\}$.  Let the clauses take the form $C^1_{i_1}\vee C^2_{i_2}\vee \cdots \vee C^\ell_{i_\ell}$ for all possible selections $i_1,\ldots,i_\ell\in \{1,\ldots,m\}$, where $C^j_{i_j}$ denotes the clause $C_{i_j}$ written using the variables $x^j_i$.  Thus, each clause of $\Phi^\ell$ corresponds to an $\ell$-tuple of clauses of $\Phi$, and is satisfied if any of the constituent clauses is satisfied.

\textbf{Claim 1.} $\Phi^\ell$ is satisfiable if and only if $\Phi$ is satisfiable.

If $\Phi$ has a satisfying assignment $x_i=t_i$, then we can create a satisfying assignment for $\Phi^\ell$ by setting $x_i^j=t_i$ for each $j$.  Conversely, suppose that $\Phi^\ell$ has a satisfying assignment $x_i^j=t_i^j$.  We claim that for some $j$, the assignment $x_i=t_i^j$ is satisfying for $\Phi$.  Suppose towards contradiction that for each $j$, there exists a clause $C_{i_j}^j$ of $\Phi$ which is not satisfied by the assignment $x_i=t_i^j$.  Then, the clause $C^1_{i_1}\vee C^2_{i_2}\vee \cdots \vee C^\ell_{i_\ell}$ of $\Phi^\ell$ is not satisfied by $x_i^j=t_i^j$, giving us our contradiction.

\textbf{Claim 2.} Suppose we remove up to $\ell-1$ variables from $\Phi^\ell$ (i.e., remove all occurrences of these variables without deleting clauses) to obtain $\tilde{\Phi}$. Then, $\tilde{\Phi}$ is satisfiable if $\Phi$ is satisfiable.

After removing up to $\ell-1$ variables, there must remain some value $j_0$ such that none of $x_1^{j_0},x_2^{j_0},\ldots,x_n^{j_0}$ is removed.  If $x_i=t_i$ is a satisfying assignment for $\Phi$, then assigning $x_i^{j_0}=t_i$ (and assigning all other $x_i^j$ arbitrarily) gives us a satisfying assignment for $\Phi^\ell$, as desired.

Let $G(\Phi)$ denote the graph defined in the proof of Theorem \ref{thm:onepartition}, where $\Phi$ is a CNF formula.   Now, consider the graph $G=G(\Phi^\ell)$, where we will pick $\ell$ to be a fixed constant, so that the size of $G$ is polynomial in $n$. Let $V^+,V^-,W^0,w^+,w^-$ be as in the preceding proof. Consider the problem of determining whether $G$ has Radon number $\ell n+2$.  Let $W$ be a subset of the vertices of $G$ such that $|W|=\ell n+2$.  Note that if $W=W^0\cup \{w^+,w^-\}$, then there exists a Radon partition of $W$ if and only if $\Phi$ is satisfiable (by Claim 1 and Theorem \ref{thm:onepartition}).  Hence, the Radon number of $G$ is greater than $\ell n+2$ if $\Phi$ is not satisfiable.  We will show now that the Radon number is at most $\ell n+2$ if $\Phi$ \emph{is} satisfiable.  We divide into two cases based upon the structure of $W$.

\textbf{Case 1.} Both $|W\cap V^+|$ and $|W\cap V^-|$ are at most $\lfloor (\ell-1)/2\rfloor$.

This condition implies that $W$ differs from $W^0\cup \{w^+,w^-\}$ in at most $\ell-1$ vertices.  Let us modify $\Phi^\ell$ to $\tilde{\Phi}$ by removing the variables corresponding to these vertices.  Then, Claim 2 implies that if $\Phi$ is satisfiable, then $\tilde{\Phi}$ is also; hence there exists a Radon partition of $W$.

\textbf{Case 2.} $|W\cap V^+|$ or $|W\cap V^-|$ is greater than $\lfloor (\ell-1)/2\rfloor$.

Without loss of generality, suppose $|W\cap V^+|\ge \lfloor (\ell-1)/2\rfloor$.  Note that, by construction, there are no triangles within $V^+$; therefore, by choosing $\ell$ large enough, we can ensure that there exist vertices $y_1,y_2,z_1,z_2\in W\cap V^+$ such that $y_1,y_2$ and $z_1,z_2$ are non-adjacent.  Then, $w^+$ is included on shortest paths between $y_1,y_2$ and between $z_1,z_2$. Hence, any partition of $W$ is a Radon partition if $y_1,y_2$ are in different parts, as are $z_1,z_2$.

We conclude that determining the Radon number of $G$ allows us to infer the satisfiability of $\Phi$, from which the result follows.
\end{proof}

%%%%%%%%%%%%%%%%%%%%%%%%%%%%%%%%%%%%%%%%%%%%%%%%

\section*{Acknowledgments}

The authors would like to thank Jes\'us De Loera and Reuben La Haye for their helpful comments during this work.  D.R.~was supported by a National Science Foundation Graduate Research Fellowship under Grant No.~1122374.

%%%%%%%%%%%%%%%%%%%%%%%%%%%%%%%%%%%%%%%%%%%%%%%%


\begin{thebibliography}{DLLHRS17}

\bibitem[BC09]{Bell:1977tm}
V.~Borozan and G.~Cornu{\'e}jols, \emph{{Minimal valid inequalities for integer
  constraints}}, Math. Oper. Res. \textbf{34} (2009), no.~3, 538--546.

\bibitem[BFZ15]{frick2015}
P.~V.~M. Blagojevi\'{c}, F.~Frick, and G.~M. Ziegler, \emph{Barycenters of
  polytope skeleta and counterexamples to the topological {T}verberg
  conjecture, via constraints}, arXiv preprint arXiv:1510.07984 (2015).

\bibitem[BL92]{Barany:1992tx}
I.~B{\'a}r{\'a}ny and D.~G. Larman, \emph{A colored version of {T}verberg's
  theorem}, J. London Math. Soc. \textbf{s2-45} (1992), no.~2, 314--320.

\bibitem[BMZ11]{Blagojevic:2011vh}
P.~V.~M. Blagojevi{\'c}, B.~Matschke, and G.~M. Ziegler, \emph{{Optimal bounds
  for a colorful Tverberg-Vre\'{c}ica type problem}}, Adv. Math. \textbf{226}
  (2011), no.~6, 5198--5215.

\bibitem[BMZ15]{Blagojevic:2009wya}
\bysame, \emph{{Optimal bounds for the colored Tverberg problem}}, J. Eur.
  Math. Soc. (JEMS) \textbf{17} (2015), no.~4, 739--754.

\bibitem[BSS81]{Barany:1981vh}
I.~B{\'a}r{\'a}ny, S.~B. Shlosman, and A.~Sz{\"u}cs, \emph{{On a topological
  generalization of a theorem of Tverberg}}, J. London Math. Soc. \textbf{2}
  (1981), no.~1, 158--164.

\bibitem[Buk10]{bukh}
B.~Bukh, \emph{Radon partitions in convexity spaces}, arXiv preprint
  arXiv:1009.2384 (2010).

\bibitem[Car07]{Car07}
C.~Carath{\'e}odory, \emph{\"{U}ber den {V}ariabilit\"atsbereich der
  {K}oeffizienten von {P}otenzreihen, die gegebene {W}erte nicht annehmen},
  Math. Ann. \textbf{64} (1907), no.~1, 95--115.

\bibitem[CDS15]{coelho2015}
E.~M.~M. Coelho, M.~C. Dourado, and R.~M. Sampaio, \emph{Inapproximability
  results for graph convexity parameters}, Theoret. Comput. Sci. \textbf{600}
  (2015), 49--58.

\bibitem[CEM{{+}}96]{Clarkson:1996ixba}
K.~L. Clarkson, D.~Eppstein, G.~L. Miller, C.~Sturtivant, and S.-H. Teng,
  \emph{Approximating center points with iterative {R}adon points}, Internat.
  J. Comput. Geom. Appl. \textbf{6} (1996), no.~03, 357--377.

\bibitem[CM12]{cicalese2012graphs}
F.~Cicalese and M.~Milani{\v{c}}, \emph{Graphs of separability at most 2},
  Discrete Applied Mathematics \textbf{160} (2012), no.~6, 685--696.

\bibitem[DLLHRS17]{deloera2015}
J.~A. De~Loera, R.~N. La~Haye, D.~Rolnick, and P.~Sober{\'o}n,
  \emph{Quantitative tverberg theorems over lattices and other discrete sets},
  Discrete Comput. Geom. \textbf{58} (2017), no.~2, 435--448.

\bibitem[Doi73]{Doignon:1973ht}
J.~P. Doignon, \emph{{Convexity in cristallographical lattices}}, J. Geom.
  \textbf{3} (1973), no.~1, 71--85.

\bibitem[DRdSS13a]{dourado2013}
Mitre~Costa Dourado, Dieter Rautenbach, Vin{\'\i}cius Gusm{\~a}o~Pereira
  de~S{\'a}, and Jayme~Luiz Szwarcfiter, \emph{On the geodetic {R}adon number
  of grids}, Discrete Mathematics \textbf{313} (2013), no.~1, 111--121.

\bibitem[DRdSS13b]{dourado2013polynomial}
\bysame, \emph{Polynomial time algorithm for the {R}adon number of grids in the
  geodetic convexity.}, Electronic Notes in Discrete Mathematics \textbf{44}
  (2013), 371--376.

\bibitem[Duc98]{duchet1998}
Pierre Duchet, \emph{Discrete convexity: retractions, morphisms and the
  partition problem}, Proceedings of the conference on graph connections,
  India, 1998, pp.~10--18.

\bibitem[Eck68]{Eck68}
J.~Eckhoff, \emph{Der {S}atz von {R}adon in {K}onvexen {P}roduktstrukturen.
  {I}}, Monatsh. Math. \textbf{72} (1968), 303--314.

\bibitem[Eck93]{Eck93}
\bysame, \emph{Helly, {R}adon, and {C}arath\'eodory type theorems}, Handbook of
  convex geometry, {V}ol.\ {A}, {B}, North-Holland, Amsterdam, 1993,
  pp.~389--448.

\bibitem[Eck00]{Eckhoff:2000jwa}
\bysame, \emph{{The partition conjecture}}, Discrete Math. \textbf{221} (2000),
  no.~1-3, 61--78.

\bibitem[Jam81]{Jamison:1981wz}
R.~Jamison, \emph{{Partition numbers for trees and ordered sets}}, Pacific J.
  Math. \textbf{96} (1981), no.~1, 115--140.

\bibitem[MS10]{Miller:2009bg}
G.~L. Miller and D.~R. Sheehy, \emph{{Approximate centerpoints with proofs}},
  Comput. Geom. \textbf{43} (2010), no.~8, 647--654.

\bibitem[MS14]{MS14}
W.~Mulzer and Y.~Stein, \emph{Algorithms for tolerant tverberg partitions},
  Internat. J. Comput. Geom. Appl. \textbf{24} (2014), no.~04, 261--273.

\bibitem[MW13]{MW13}
W.~Mulzer and D.~Werner, \emph{Approximating tverberg points in linear time for
  any fixed dimension}, Discrete Comput. Geom. \textbf{50} (2013), no.~2,
  520--535.

\bibitem[Onn91]{Onn:1991er}
S.~Onn, \emph{{On the geometry and computational complexity of Radon partitions
  in the integer lattice}}, SIAM J. Discrete Math. \textbf{4} (1991), no.~3,
  436--446.

\bibitem[Rad21]{Radon:1921vh}
J.~Radon, \emph{{Mengen konvexer K{\"o}rper, die einen gemeinsamen Punkt
  enthalten}}, Math. Ann. \textbf{83} (1921), no.~1, 113--115.

\bibitem[Rad46]{rado1946theorem}
R.~Rado, \emph{A theorem on general measure}, J. London Math. Soc. \textbf{1}
  (1946), no.~4, 291--300.

\bibitem[Sca77]{Scarf:1977va}
H.~E. Scarf, \emph{An observation on the structure of production sets with
  indivisibilities}, Proc. Nat. Acad. Sci. \textbf{74} (1977), no.~9,
  3637--3641.

\bibitem[Sob15]{Soberon:2013fr}
P.~Sober{\'o}n, \emph{Equal coefficients and tolerance in coloured {T}verberg
  partitions}, Combinatorica \textbf{35} (2015), no.~2, 235--252.

\bibitem[SS12]{Soberon:2012er}
P.~Sober{\'o}n and R.~Strausz, \emph{{A generalisation of Tverberg's theorem}},
  Discrete Comput. Geom. \textbf{47} (2012), 455--460.

\bibitem[Tve66]{Tverberg:1966tb}
H.~Tverberg, \emph{{A generalization of Radon{\textquoteright}s theorem}}, J.
  London Math. Soc. \textbf{41} (1966), no.~1, 123--128.

\bibitem[vdV93]{van1993theory}
M.~L.~J. van~de Vel, \emph{Theory of convex structures}, North-Holland
  Mathematical Library, vol.~50, North-Holland Publishing Co., Amsterdam, 1993.
  \MR{1234493 (95a:52002)}

\bibitem[Vol96]{Volovikov:1996up}
A.Y. Volovikov, \emph{{On a topological generalization of the Tverberg
  theorem}}, Math Notes \textbf{59} (1996), no.~3, 324--326.

\end{thebibliography}
\end{document}